\declaretheorem{definition}
\declaretheorem{lemma}
\newcommand{\para}[1]{{\vspace{4pt} \bf \noindent #1 \hspace{10pt}}}
\newcommand{\fixme}[1]{\textcolor{black}{#1}}
\newcommand*{\defeq}{\mathrel{\rlap{%
                     \raisebox{0.3ex}{$\m@th\cdot$}}%
                     \raisebox{-0.3ex}{$\m@th\cdot$}}%
                     =}
\begin{document}

\title{A Geometric Distance Oracle for Large Real-World Graphs}
\author{Deepak Ajwani\thanks{Bell Labs, Alcatel-Lucent, Dublin, Ireland},
W. Sean Kennedy\thanks{Bell Labs, Alcatel-Lucent, Murray Hill, NJ, USA},
Alessandra Sala$^{*}$, Iraj Saniee$^{\dagger}$
}



\maketitle

\begin{abstract}

Many graph processing algorithms require
determination of shortest-path distances between arbitrary numbers of node pairs. Since computation of exact distances between 
all node-pairs of a large graph, e.g., 10M nodes and up, is 
prohibitively expensive both
in computational time and storage space, 
distance approximation is often used in place of
exact computation.  
A {\em distance oracle} is a data structure 
that answers inter-point 
distance queries more efficiently than in
standard $O(n^2)$ in time or storage space for an $n$ node
graph, e.g., in $O(n \log{n})$.
In this paper, we present a novel and scalable distance 
oracle that leverages the hyperbolic core of
real-world large graphs for fast and scalable
distance approximation via spanning trees.  We show 
empirically 
that the proposed oracle significantly outperforms 
prior oracles  
on a random set of test cases
drawn from public domain graph libraries.  
There are two sets of prior work against which
we benchmark our approach. The first
set, which often outperforms all other oracles, 
employs embedding of the graph into 
low dimensional Euclidean spaces with carefully 
constructed hyperbolic distances, but provides no 
guarantees on the distance estimation error. 
The second set leverages
Gromov-type tree contraction of the graph with the
additive error guaranteed not to exceed $2\delta\log{n}$, 
where $\delta$ is the hyperbolic constant of the graph.  
We show that our proposed oracle 1) is significantly 
faster than those oracles that use hyperbolic embedding 
(first set) with similar approximation error
and, perhaps surprisingly, 2) exhibits substantially lower 
average estimation error compared to Gromov-like
tree contractions (second set). 
We substantiate our claims through numerical computations 
on a collection of a dozen real world networks and 
synthetic test cases
from multiple domains, ranging in size from 10s of 
thousand to 10s of millions 
of nodes.

\end{abstract}

\section{Introduction}
\label{sec:intro}
The explosion of available information in the past decade, in part due to 
the rapid shift towards online media and interactions has led many research, business and marketing communities to store and analyze very large
data sets.
Mining these data sets promises to reveal a wealth of information about 
the interests of and the kind of interactions between subscribers, groups, 
people, objects and even ideas.
These interactions are often naturally represented by graphs, where for example, nodes
correspond to the entities of interest and the (weighted) links represent the strength of the interaction between them.
Graphs extracted for data mining are often massive, comprising of 
millions to billions of connections.
At this scale, graph algorithms requiring $\Omega(n^2)$ computational
steps or storage reach their useful limit in terms of 
run time and memory requirements.
There is clearly a need for implementations 
of graph computational primitives at this scale.

Computing shortest path distance between arbitrary nodes of a graph is 
among such fundamental computational primitives.
Many data mining schemes invoke this computational primitive in the scale 
of the number of nodes, and therefore it is imperative that this 
computation can be carried out very rapidly and with limited
memory consumption.
Distance oracles are among the many approaches that have been proposed and 
used for simplification of shortest distance computation for
large-scale graphs.  A distance oracle involves an auxiliary data structure which
is cheaper to compute and fast to query.  It should ideally satisfy the 
following four properties:
\begin{enumerate}
\item (Initial Processing Speed) the computation involved in the creation of 
  the auxiliary data structure should be scalable, e.g. be $O(n)$ or $O(n\log{n})$ 
  (and not $O(n^2)$ or more complex),
\item (Storage) the auxiliary data structure should be represented in much 
  smaller space (storage or memory) compared to storing shortest path 
lengths between all node pairs,
\item (Fidelity) path length (estimation) queries using the auxiliary 
  data-structure should return 
  distances which are as close as possible (if not equal) to the actual 
  distances,
\item (Query Speed) the time required to query the distance between any two 
  nodes should be very small (e.g., small fraction of a second). 

\end{enumerate}

In this paper, we focus on a distance oracle for 
the large-
scale graphs that 
models interactions arising naturally in the real-world, as 
in online social networks,
call graph, co-authorship, citations, hyperlinks in world 
wide
web and similar graphs. We refer to these graphs as real-world graphs.  
In recent years, considerable effort has been expended in 
developing approximate distance oracles on these graphs, 
e.g., see~\cite{rigel,Tretyakov11,Qiao11,orion,20BFS,akiba13,qi13}, 
however these heuristics lack a theoretical foundation that would
explain their observed accuracy in practice. 
On the other hand, there are theoretical
results
~\cite{gav-ly,CD00,chepoi,chen12,microsoft,gromov, Frenchies} 
which provide guaranteed approximation bounds for 
specific graph classes.
However, the accuracy of many of these methods has not been empirically evaluated on real-world graphs.

\para{Our Contribution.} We present a novel
distance approximation oracle that leverages 
the notion of graph hyperbolicity~\cite{gromov} observed 
in real-world 
networks~\cite{ns11,montgolfier11,kennedy13,inria13}
and specifically uses
the `hyperbolic core' of the graph ~\cite{ns11,jonk11}
for a spanning tree approximation.
Hyperbolicity captures the geometric notion of 
negative curvature in smooth geometry, which we formally 
define in Section~\ref{sec:hyperbolicity} 
in the context of a graph; intuitively, and crucially,
as observed recently~\cite{ns11,jonk11},
it expresses the case that a fixed fraction 
$\Theta(n^2)$ of all shortest paths 
traverse a small set of nodes in the graph, thus
relative to this small set of nodes, the graph is 
tree-like in
some fundamental ways, a property which we exploit.
Our approach also bridges the gap between a) the 
theoretical understanding of tree approximations 
for hyperbolic graphs~\cite{gromov,gav-ly,
CD00,chepoi,chen12,microsoft,Frenchies} 
and b) the recent practical distance approximation
solutions~\cite{rigel,Tretyakov11,Qiao11,orion,
20BFS,akiba13,qi13} which exhibit high accuracy.

Our approach is distinct from both sets of prior work 
since we use neither 
1) any kind of hyperbolic embedding~\cite{rigel,orion}, nor 
2) any form of Gromov-type tree contraction and
labeling schemes~\cite{CD00,gav-ly,chepoi}.  
We construct a breadth-first search 
spanning tree prioritized on the nodal betweenness 
centrality and its approximation by nodal degree, 
as detailed in Section~\ref{BFSapproach}.
The height of these trees is almost always
$O(\log{n})$\footnote{This is due to the well-cited 
small world property of real-world graphs where 
the diameter is 
observed to scale like $O(\log{n})$ or even faster
(unless the graph has a dominant line in it).} 
and we use this fact to 
encode distances in trees with $O(n \log{n})$ bits 
(or $O(n)$ words) to support $O(\log{n})$ 
distance queries.
%
Thus, our
proposed oracle 1) creates tree approximations
rapidly, 2) uses $O(n)$ words of space for storage for an 
$n$-node graph, 
3) has high fidelity and
low distortion (due to the tree root being in the
hyperbolic core) and 
4) returns queries very rapidly, meeting 
the key criteria we set out for an effective oracle. 
We also demonstrate empirically that
the accuracy of the proposed oracle on large real-world 
graphs is significantly better than the theoretical bounds
and competitive with the best practical solutions.

\para{Empirical and Synthetic Datasets.} Our empirical comparison is
carried out on a wide range of 
real-world and synthetic graphs, representing a variety of different
topological structures 
from the irregular connectivity of {\em small world} graphs to the
geometric symmetry and regularity of grid and {\em hypergrid}
graphs. Specifically, we experiment on two online social graphs from
Facebook, i.e., Santa Barbara and New York
(from~\cite{ucsb_facebook_data}), two call graphs from
anonymous telecom operators, two collaboration
networks~\cite{Lasagna}, a Google news graph, a Peer-to-Peer network,
a web graph,  and two synthetic networks, that we call
FlatGrid and HyperGrid. FlatGrid is a square lattice.
HyperGrid is a bona fide hyperbolic locally planar 
graph with degree 7 for interior nodes, lower degrees for
boundary nodes, and triangular faces. 
In Table~\ref{table:graphstats1} we summarize 
the topological and geometric characteristics of 
these datasets. 

\para{Outline of Paper}  Section~\ref{sec:rel_work} summarizes the
relevant prior work on distance oracles. In
Section~\ref{sec:hyperbolicity}, we describe the notion of
graph hyperbolicity that is at the core of our proposed geometric oracle. Next, 
in Section~\ref{sec:oracle},
we describe the proposed distance approximation oracle 
and also present various alternative
oracles for benchmarking. 
Subsequently in Section~\ref{sec:results}, we describe 
the experimental
methodology and summarize our results.

\begin{table}[tr]
\begin{center}
\caption{\small $9$ Real and $2$ Synthetic Benchmark Graphs}
\vspace{0.1in}
\begin{footnotesize}
\begin{tabular}{|c|c|c|c|c|}
\hline
   Dataset & \#Nodes & \#Edges & Avg.  &Topological \\
&&Deg.&Deg.&Structure\\
 \hline
 GoogleNews& 15.8K&164.1K& 21 &News\\
 AstroPhysics &  18.7K  &216.8K&23 &Collab.\\
 Facebook SB &26.6K& 968K&73&Social\\
 Gnutella& 62.6K&210.5K&7&Peer-to-Peer\\
 Call Graph I  &631.6K&822.9K&3&Call Graph\\
BerkStanford&685.2K&7.3M&21&Web\\
 Facebook NY &905.7K&10.6M&23&Social\\
 DBLP&   1.1M&4.7M&9&Collab.\\
  Call Graph II  &47.2M&329.3M&14&Call Graph\\
 \hline
 FlatGrid &10.0K&19.8K&4&Grid\\
 HyperGrid &29.3K  &51.6K &4&Hyperbolic\\
\hline
\end{tabular}
\label{table:graphstats1}
\vspace{-0.1in}
\end{footnotesize}
\end{center}
\vspace{-0.2in}
\end{table}

\section{Related Work}
\label{sec:rel_work}

\para{Theoretical Bounds on General Graphs.}
There is a rich body of literature on distance oracles for general graphs, including the special 
case of distance labeling scheme where the distance for query node pairs is 
estimated by merely using labels associated with the query nodes, and the
related problems of graph spanners. The seminal work of Thorup and
Zwick~\cite{Thorup05} described a distance oracle that gives $2k-1$
approximation with $O(k)$ query time, $O(k n^{1+1/k})$ space and $O(k
m n^{1/k})$ preprocessing time on an arbitrary weighted undirected graph with 
$n$ nodes and $m$ edges, for any integer $k \geq 2$.
The preprocessing time and the query time of this distance oracle were
subsequently improved (c.f.~\cite{Baswana06,Kavitha06,Baswana08,Mendel06}), 
but the 
space versus approximation factor trade-off has remained almost the same.
In fact, various lower bounds have been proved under plausible
conjectures (c.f.~\cite{Patrascu10} and the references therein) for
the space versus worst-case approximation factor trade-off. These
lower bound results suggest that it is unlikely that a distance oracle
can result in a significantly better trade-off for general weighted graphs.

\para{Empirical Work on Road Networks.}
In contrast to the theoretical work on general graphs, there has been
considerable algorithm engineering and experimentation work on
road networks for navigation applications using global positioning
systems (GPS). These solutions (e.g.,~\cite{Bast09,Delling11,Delling11b,Sanders12}) crucially rely
on many specific characteristic properties of road networks such as
the existence of small natural cuts, a grid-like structure, highway
hierarchies, guiding the search towards the target using the latitude
and longitude of the target location, etc. 
On other graph classes such as those  from
online social networks, equivalent solutions are not generally known
to produce equally good approximations.

\para{Approximate Distances on Real-World Networks.} 

\noindent Distance oracles have been investigated 
both from theoretical and practical perspectives.
As described earlier, our focus is on 
distance oracles that provide accurate distance estimates
on large real-world graphs rapidly (a few microseconds) 
while having scalable 
preprocessing and near-linear storage requirement. This 
category includes a number of recent practical
heuristics \cite{orion,rigel,
20BFS,Tretyakov11,Qiao11,qi13}) that aim to estimate 
distance by embedding the graph in geometric spaces, like 
Euclidean or Hyperbolic, or by extracting different kinds 
of approximating trees from the real graph.
Other heuristics, such as, \cite{Tretyakov11,20BFS} or the 
landmark based approaches with diverse
seeding strategies~\cite{potamias09} 
use variants of 
breadth first search (BFS) trees. 
We also remark that the sketch-based
distance oracle by Sarna et al.~\cite{Sarna10}, that 
engineers a distance oracle with provable multiplicative guarantees, is similar in nature to other approaches 
cited above. 
On the other side, there are some theoretical
approaches~\cite{chen12,microsoft,gromov,Frenchies} that prove 
worst-case bounds on accuracy for specific  
graph classes such as those with power-law degree distribution or
those with small graph 
hyperbolicity~\cite{CD00,gav-ly,chepoi}.
These techniques have not been evaluated on large 
real-world graphs. We evaluate some of these
techniques for benchmarking of our oracle's
performance. 

\para{Exact Distance Oracles.} 
Since exact distance oracles require the
computation or storage of all pair shortest path, 
we do not consider these
(e.g.,~\cite{akiba13,abraham12} and the references therein) 
in our
comparison. We observe that even with the best combination 
of engineering
insights, all-pairs shortest path computation remains far 
too slow for
large graphs with $\sim 10^7$ nodes or edges and beyond, 
and this is especially 
unmanageable when these graphs do not fit in the main 
memory of the computing device. 
Also, we do not consider the
oracles that have 
$\Omega(n^{\epsilon}), \epsilon>1$ query complexity
(e.g.,~\cite{agarwal13,agarwal13b,porat11}), as these are 
unlikely to
yield efficient solutions.

\section{A Geometric Distance Oracle}
\label{sec:hyperbolicity}
In this section, we present an overview of 
graph hyperbolicity and discuss how graph hyperbolicity 
may be exploited to 
design an effective distance oracle.
Recent studies~\cite{ns11,inria13,kennedy13} show that 
large-scale networks, from IP-layer connectivity, citation,
collaboration, co-authorship and friendship graphs, exhibit
strong intrinsic hyperbolicity.  
\begin{figure*}[t]
\centering
\subfigure[3-point condition]
{
\includegraphics[trim=0cm 13cm 15cm 0cm, clip=true,scale=0.45]{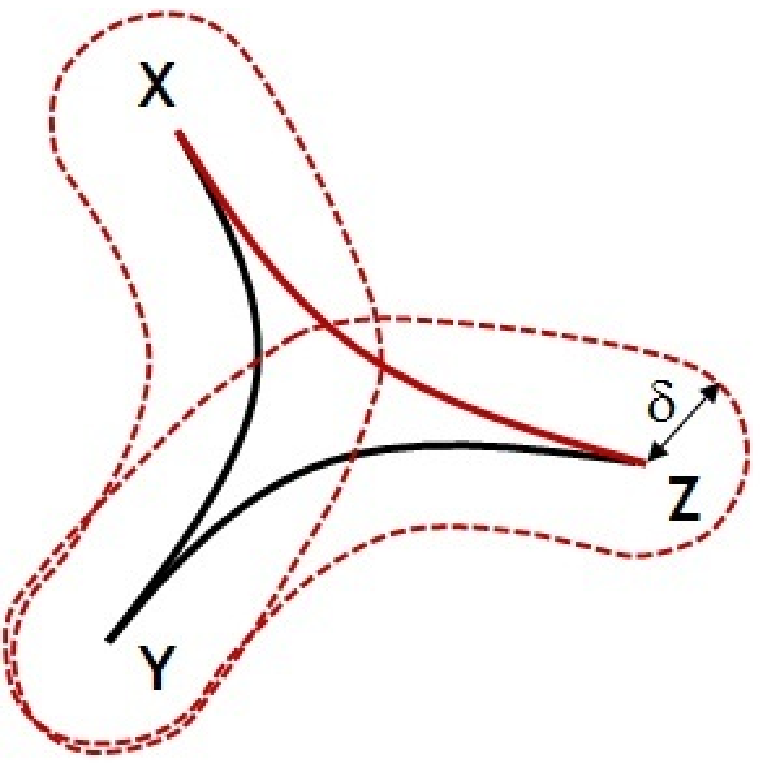}
\label{fig:3-point}
}
\subfigure[4-point condition]
{
\includegraphics[scale=0.6]{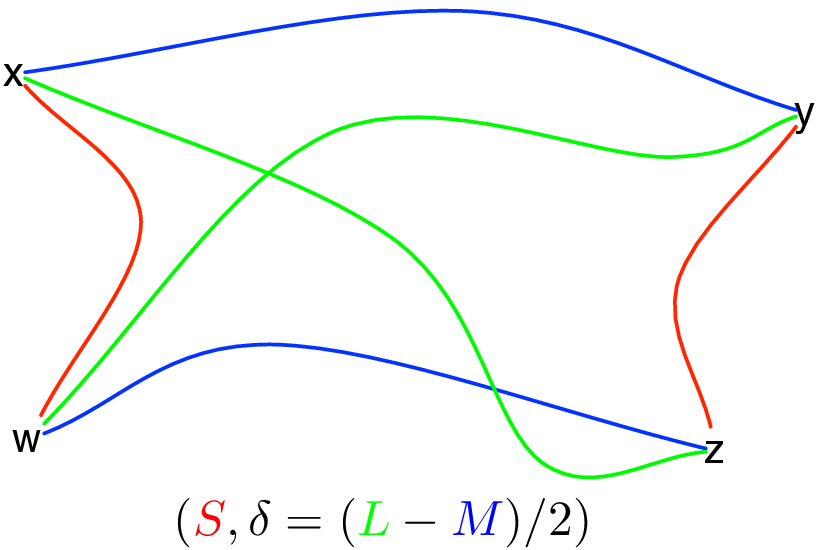}
\label{fig:4-point}
}
\subfigure[2$\delta$ distortion relative to trees]
{
\includegraphics[scale=0.75]{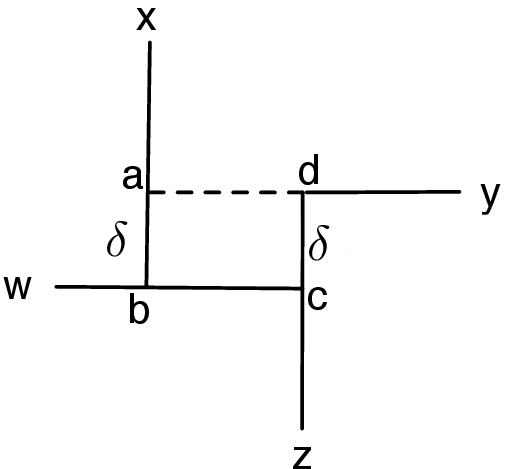}
\label{fig:2d-distortion}
}
\caption{(a) A schematic representation of the 
simple-to-visualize 3-point condition.  
(b) Simpler to compute is the 
the equivalent 4-point condition, 
where for any four points $w,x,y,z$, 
the largest sum $L$ of opposite pairs 
($xz$+$yw$ in this picture) minus 
the middle sum $M$ (=$xy+wz$ here) is no more than 
twice $\delta$. (c) The 4-point condition 
implies a $2\delta$ deviation from
a tree graph, in which the largest among the three 
sums of opposite pairs of distances of four points
is always equal to the medium sum.}
\label{fig:delta-hyperbolic-computation}
\vspace{-0.1in}
\end{figure*}

\para{$\delta$-Hyperbolicity}
Intrinsic hyperbolicity, by which is meant hyperbolicity
without any embedding of the graph into some Euclidean and 
other space, 
captures the geometric notion of negative curvature in smooth geometry. In the past two decades, the notion 
of negative curvature has 
been successfully exported to metric spaces, 
which are more general and less restrictive than
Euclidean spaces.
This generalized notion of curvature lends itself 
naturally to the investigation of
(large-scale) curvature in graphs \cite{ns11, kennedy13}.
A simple description of $\delta$-hyperbolicity in a
(path) metric space is
that the three sides of \emph{any} shortest-path
triangle $X,Y,Z$ always come within
a certain fixed distance $\delta$ of each other,
where $\delta$ is a fixed minimal constant 
associated with 
the graph.  
In other words, 
the union of two $\delta$-neighborhoods of any two
sides of a shortest-path triangle includes the third, 
as depicted in Figure~\ref{fig:3-point}.

An easier to compute equivalent condition is
as follows. Given a graph $G(V,E)$ and any four nodes $w,x,y,z \in V$, consider
the three sums of distances between opposite pairs of nodes.  
Specifically, let $S \defeq S(w,x,y,z)=d(w,x) + d(y,z) \leq M \defeq M(w,x,y,z)=d(x,y) + d(w,z) \leq
L \defeq L(w,x,y,z) =d(x,z) + d(w,y)$ as shown in 
Figure~\ref{fig:4-point}.
Here, $d(x,y)$ is the shortest path distance 
between nodes $x$ and $y$ in $G(V,E)$; 
often we will assume the standard `hop' metric on $G$.
The hyperbolicity $\delta$ of a graph may be defined~
\cite{gromov} as follows:
\begin{definition}
A graph $G(V,E)$ is said to be
$\delta$-hyperbolic for some $\delta \geq 0$ if for every 
four nodes $w,x,y,z \in V$, 
$(L - M)/2 \leq \delta$ is always
satisfied (i.e., the largest two of the three sums of 
opposite side distances differ by no more than $2 \delta$).
\end{definition}
We refer to $\delta$ as the (4-point) hyperbolic 
constant of the graph\footnote{Even though $\delta$ is used
to refer to both the 3-point and 4-point constants, 
for the same
graph these two need not have the same value. From here on 
we use $\delta$ to denote the 4-point constant 
of the graph.}.
Notice that any finite graph with diameter $\Delta$ is 
trivially $\Delta$-hyperbolic.
Thus this definition is insightful only when $\delta$ is 
considerably smaller than $\Delta$, as argued, for example, 
in \cite{ns11,kennedy13}. In Table~\ref{table:graphstats2} 
we list the
estimated $\delta$ values on our benchmark graphs to show 
that indeed
on real world networks the $\delta$ values are actually 
very small. 
For a 
detailed description of how $\delta$ may be estimated
for large graphs, see \cite{ns11,kennedy13,inria13}. 
\begin{table}[t]
\begin{center}
\caption{\small Key Geometric Characteristics of the $9$ Real 
and $2$ Synthetic Benchmark Graphs (based on large samples).}
\vspace{0.1in}
\begin{footnotesize}
\begin{tabular}{|c|c|c|c|c|}
\hline
   Dataset & Clust Coeff. & $\max \delta$ &avg $\delta$& Diam\\
 \hline
 GoogleNews& 0.45&1.0&0.03&4\\
 AstroPhysics &0.50&2.0&0.24&\textasciitilde5\\
 Facebook SB &0.22&2.0&0.23&14\\
 Gnutella&0.01&1.5&0.08&\textasciitilde5\\
 Call Graph I &0.14&4.5&0.39&\textasciitilde47\\
BerkStanford&0.50&2.0&0.16&\textasciitilde5\\
 Facebook NY &0.16&2.0&0.22&\textasciitilde19\\
 DBLP&0.65&2.5&0.24&\textasciitilde23\\
Call Graph II &0.09&$<4$&0.27&\textasciitilde27\\
 \hline
 FlatGrid &0.00&87.0&7.40&198\\
 HyperGrid&0.33&1.0&0.27&18\\\hline
\end{tabular}
\label{table:graphstats2}
\end{footnotesize}
\end{center}
\vspace{-0.2in}
\end{table}

\para{$\delta$-Approximation of Hyperbolic Graphs with Trees.}
It has been observed that the distance metric of a 
$\delta$-hyperbolic graph can be 
viewed as a $\delta$-approximation of a tree metric, since for any four points 
$w,x,y,z \in V$ in a
tree $T(V,E')$, (for an appropriate set $E'$ of edges)
one necessarily has that $S \leq M =
L$, thus resulting in $\delta=0$.\footnote{Note that another notion of tree-likeness, 
tree-width, is an independent property and 
unlike $\delta$-hyperbolicity its existence
in real-world graphs is not common. For instance, 
snapshots of
the internet at autonomous system level have large tree-width, but low hyperbolicity~\cite{montgolfier11}.}

While trees and block graphs are
$0$-hyperbolic and chordal graphs have $\delta \leq 1$, cycles $C_n$ are $O(n)$-hyperbolic and square grids 
with $n$ nodes are $O(\sqrt{n})$-hyperbolic.

It turns out that similar to trees, $\delta$-hyperbolic graphs have a non-empty core of nodes
whose betweenness centrality is maximal possible, that is, they have nodes whose betweenness
centrality scales as $\Theta(n^2)$ (where $n:=|V|$ is the size of the graph), 
as argued in~\cite{ns11,jonk11}. 
In other words, $\delta$-hyperbolicity also 
captures the
notion that there is always a non-empty set
of vertices, its `core',
where a fraction of \emph{all} shortest paths pass through.\footnote{Note that our usage of the term `core' here
is different from the core of online social networks (that contains most of the
graph nodes) and the notion of coresets used in streaming algorithms literature.}
This is a crucial property that we shall leverage in
our distance approximation oracle.

\para{Theoretical Bounds via Tree Approximations.}
We focus on three classes of distance 
oracles that approximate
distances on graphs via trees. First, 
Gromov-like techniques, 
principally~\cite{gromov,gav-ly,chepoi}, 
yield an additive guarantee on the 
distortion on distances via tree 
approximations of the graph, where the resulting trees 
are not sub-graphs of the original graph but are
typically (based on) contractions. 
Specifically, Gromov's notion of
hyperbolicity leads to approximations of a graph $G$
with trees $T$ that satisfy 
the following bound: For any two points
$x,y \in V(G)$, the shortest path distance 
$d_{GT}(x,y)$ between them in $T$ has distortion 
$|d_G(x,y) - d_{GT}(x,y)| \le 2\delta \log{n}$. 
Second, the tree embedding proposed 
by Abraham {\em et al}~\cite{abraham12}.
yields a multiplicative guarantee on the distortion for 
distances in the tree. A metric space $X$ is defined to be $\varepsilon$-hyperbolic, 
$\varepsilon \in [0,1]$, 
if every set of four points $w,x,y,z$ in $X$ ordered 
so that $d(w,x)+d
(y,z) \le d(w,y)+d(x,z) \le d(w,z)+d(x,y)$ satisfies
$$d(w,z)+d(x,y) - d(w,y) - d(x,z) \le 2\varepsilon\min \{d(w,x),d(y,z)\}.$$
For such a metric space the authors give an algorithm to construct an 
embedding $X$ into a tree $T$ where for any two points $x,y \in X$ the shortest path distance $d_T(x,y)$ between them in $T$ satisfies $\frac{\max_{x,y} (d_T(x,y)/d_G(x,y))}{\min_{x,y} (d_T(x,y)/d_G(x,y))} \le (1+\varepsilon)^{c_1 \log |X|}$.
Further, up to the constant $c_1$, they show that there 
exists an
example where this inequality is tight. Finally, 
Chepoi and Dragan~\cite{CD00} construct contraction
trees (i.e., many nodes of $G$ may map to 
the same node in $T$) that approximate the graph 
distances with an additive distortion not exceeding 
$|c(G)/2| + 2$ where $c(G)$ is the length of the longest 
chordless cycle in $G$.
This kind of approximation is also appropriate 
to consider when real-world graphs only have 
relatively short chordless cycles.

\section{Algorithmic Design }
\label{sec:oracle} In this section, we present an 
algorithmic framework for designing distance oracles 
that leverage the small
hyperbolicity of real-world graphs.  We exploit the hyperbolic core~\cite{ns11,jonk11}
to construct a spanning tree BFS sub-graph approximation for
the graph, and not tree contractions 
with non-graph edges as done in the 
above-referenced three approaches.
Since, as cited, for $\delta$-hyperbolic graphs, asymptotically 
a fixed fraction of all
shortest paths in the graph pass through its 
non-empty hyperbolic core, picking a node from the core as
the tree root will give us zero distortion for all
node-pair distances whose shortest paths traverse 
that node.
The node with the highest betweenness centrality is
thus ideal as the root of the spanning tree:
The smaller the hyperbolic constant $\delta$
of the graph, the tighter the core, and the
larger the fraction of all node-pair paths that
traverse a typical node in the core,
until for $\delta=0$ we have a single node to pick
as the root to ensure a fixed fraction of
all node-pair paths traverse it. The proposed tree 
oracle therefore
starts from nodes with the highest 
(betweenness) centrality.  The tree then expands
as a BFS tree by adding unvisited nodes prioritized by their 
(betweenness) centrality 
values until all nodes are included in the tree. 

Now, since computation of
centrality is too expensive on large-scale graphs,
we need a good proxy for centrality that is
easy to compute.  Empirically, it has been
observed on real-world graphs that
nodal degrees correlate (significantly) well 
with (betweenness) centrality~\cite{ns11}, and 
Figure~\ref{fig:deg_corr} 
shows a typical correlation chart for the FacebookSB
network.
A degree-based prioritization is then employed both 
during the selection of initial root of the tree as
well as during its expansion. 

\begin{figure}
\centering
\includegraphics[scale=0.3,angle=270]{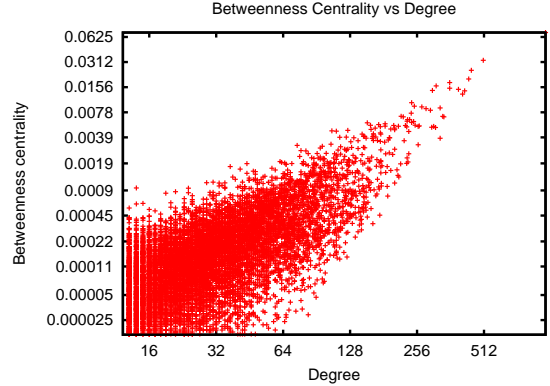}
\caption{Degree and betweenness centrality of nodes in the Facebook SantaBarbara graph.}
\label{fig:deg_corr}
\vspace{-0.1in}
\end{figure}

\para{Four Benchmarked Oracles.}
In the remainder of this section, we describe three 
distance oracles that either have
proven bounds on distance distortion
or are best-of-class empirically.  We compare 
the results with our proposed geometric oracle
on the eleven samples of real networks 
and the two synethetic graphs, as shown in
Table~\ref{table:graphstats2}.
More specifically,
we examine 1) a representative approach 
from Gromov-type contraction-based tree approximation 
with proven bounds: \textbf{Gromov Tree}, see
Figure~\ref{fig:ex0}(c),
2) an oracle based on
Steiner trees with proven multiplicative bound:
\textbf{$\varepsilon$-approximate Steiner Tree}, 
3) an empirical best-of-class landmark-based 
approach that exploits hyperbolic embedding:
\textbf{Rigel}, 
and 4) our proposed centrality-based spanning tree 
oracle: \textbf{HyperBFS}, see
Figure~\ref{fig:ex0}(e),
as outlined above. 

\subsection{Hyperbolicity-based Tree Oracle}
\label{BFSapproach}
A $\delta$-hyperbolic graph may be viewed as an 
approximation to a tree.  We thus
aim to extract a `backbone spanning tree' 
(or a small collection of such
trees for better distance approximation) 
from the input graph to construct our geometric
oracle. As stated earlier, we expect a BFS spanning tree
with a highly central vertex as root would closely
approximate distances in a $\delta$-hyperbolic real-
world graph (cf. Section~\ref{sec:results} for more
details). Also as argued earlier, we may use degree as a 
proxy for centrality
for computational efficiency and select the highest 
degree node as 
the root.
\begin{figure}
\centering
\includegraphics[scale=0.35]{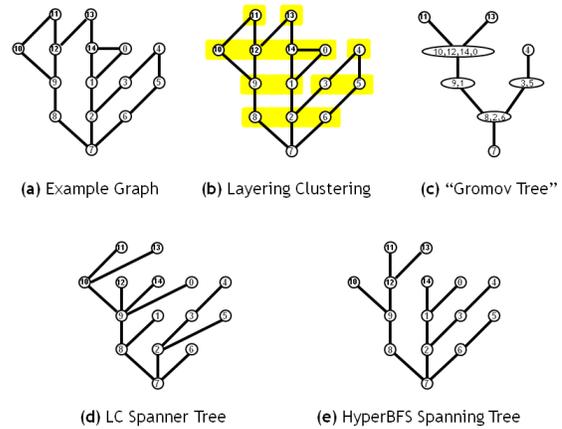}
\caption{(a) An example graph, (b) its layering partitioning~\cite{CD00}, (c) the resulting ``Gromov Tree'' approximation, (d) the associated 
tree spanner~\cite{chepoi}, 
and (e) our Hyper BFS spanning tree.  
As it may be seen from this simple example, 
Hyper BFS is distinct from (c) and (d), while 
(c) and (d) are very closely associated.}
\label{fig:ex0}
\end{figure} 

Therefore, in constructing the Hyper BFS spanning tree 
$T$, we choose the order in which new nodes are added 
to the BFS tree strictly based on their degree.
Algorithm~\ref{alg:BFS} gives pseudocode for constructing the Hyper BFS tree based on a general vertex ordering 
$\pi$, which in our implementation, and unless stated otherwise, is based on degree.
 \begin{algorithm}
 {\bf Input:} Graph $G$, vertex ordering $\pi$\\
 {\bf Output:} Tree $T$\vspace{.1cm}\\
 {\bf let} $r$ be the first vertex in $\pi$\\
 {\bf set} $Q = \{r\}$ and $T = r$.\\
 {\bf while} $Q$ is not empty:\\
 ... {\bf set} $v$ top of $Q$ (remove $v$ from $Q$)\\
 ... {\bf let} $N = N(v)$ ordered by $\pi$\\
 ... {\bf for each} $n \in N$ not already in $T$:\\
 ...... {\bf push} $n$ on $Q$\\
 ...... add vertex $n$ and edge $nv$ to $T$\\
 {\bf return} T
 \caption{ -- Hyper BFS with Vertex Ordering\label{alg:BFS}}
 \end{algorithm}

To improve the accuracy of distance estimates, and as
practiced previously, 
we use a small number of different trees, each with a
distinct node as root. We construct a collection of 
such trees 
rooted at distinct nodes in the hyperbolic core of
the graph, which as stated, we approximate by nodal 
degree.
Finally, the distance between two nodes $x$ and $y$ is answered by returning the minimum of
the distances between $x$ and $y$ in the (small set of) different trees we construct.

\subsection{Contraction Trees with Bounded Approximation}
\label{sec:treeapprox}
We start with the `Gromov tree' technique 
which yields an approximation with guaranteed additive 
distance distortion based on a contraction tree.
Next we outline the approach due to 
Abraham {\em et al.} which 
yields a multiplicative guarantee on the distance 
distortion in an approximating tree.

\para{Gromov tree approximation.} 
Gromov's notion of hyperbolicity~\cite{gromov} naturally 
leads to a design for a linear time algorithm for 
approximating a graph $G$ with a contraction tree $T$.
For any two points $x,y \in V(G)$, the 
unique
shortest path distance $d_{GT}(x,y)$ between them in the Gromov tree $T$
satisfies $|d_G(x,y) - d_{GT}(x,y)| \le 2\delta \log{n}$.
Following~\cite{CD00}, for a connected graph $G$, we 
fix a root $r \in V(G)$ and for $i \ge 0$,  
let $N_i(r)$ be the vertices at distance {\bf exactly} $i$ from $r$ in $G$, and 
$B_i(r)$ be the vertices at distance {\bf at most} $i$ from $r$ in $G$.
A Gromov tree $T$ is then obtained by contracting all node pairs $u, v \in N_k(r)$ (for all $k \ge 1$), such that there exists a path between $u$ and $v$ entirely contained outside $B_{k-1}(r)$. 
For convenience, we define a {\em $i$-level connected 
component} as a maximal subset $C$ of vertices in $N_i(r)$ 
such that each pair of vertices in $C$ is connected in 
$G \setminus B_{(i-1)}(r)$.   
This construction (`layer partition' of a graph), 
proposed in~\cite{CD00,chepoi} 
and also used by~\cite{gav-ly}, determines a tree $T$ 
satisfying 
the above properties in linear time by dealing first with 
those nodes furthest from $r$; if we think of 
trees with the root at the top, we can say `in a bottom-up 
manner'. 
In each level, $N_i(r)$, we contract the $i$-level 
connected components into a single node. 
The bottom-up approach yields that in order to find these 
components we need only consider the edges completely 
contained in $N_i(r)$ and those between 
$N_i(r)$ and $N_{i+1}(r)$.  
It follows that the total time checking $i$-level 
connectivity throughout the execution is linear, and 
therefore, the algorithm also runs in linear time.   

Although this Gromov-type tree
construction (per~\cite{CD00,chepoi,gav-ly})
allows arbitrary nodes to be picked 
as root, in our implementation we pick a
high centrality (with degree as proxy) node. 
We found 
that this modification considerably improves the accuracy 
of distance estimates via `Gromov type' 
contraction trees. Hereafter, we refer to 
our modified and improved implementation of prior work
(such as \cite{CD00,chepoi,gav-ly}) 
as the {\em Gromov Tree}.
Pseudocode of our implementation 
is given in Algorithm \ref{alg:Gromov}.

 \begin{algorithm}
 {\bf Input:} Graph $G$\\
 {\bf Output:} Tree $T$\vspace{.1cm}\\
 {\bf let} $r$ be high degree node\\
 {\bf determine} sets $N_0(r),\ldots, N_\ell(r)$\\
 {\bf set} $T = G$\\
 {\bf for} $i = \ell$ downto $0$:\\
 ...  {\bf for each} $i$-level connected component $C$ of $N_i(r)$:\\
 ...... contract $C$ in $T$.\\
 {\bf return} T
 \caption{ -- Gromov Tree\label{alg:Gromov}}
 \end{algorithm}

\fixme{In the construction of `Gromov tree', the distance
$d_{GT}(x,y) \leq d_G(x,y)$. This allows us to use 
the above
technique for a lower bound on the distance estimate and, 
thereby,
compute an approximation range around the real distance 
(more on this in Section~\ref{sec:approx_range}).}

\begin{lemma}
\fixme{For any two nodes $x,y$ in $G$, $d_{GT}(x,y) \leq d_{G}(x,y)$}
\label{lemma:gromov_contraction}
\end{lemma}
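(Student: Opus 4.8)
The plan is to exploit the fact that the Gromov tree $T$ is obtained from $G$ purely by \emph{contractions}, together with the observation that contracting vertices can never increase pairwise distances. Concretely, the construction of Section~\ref{sec:treeapprox} identifies, within each layer $N_k(r)$, all vertices lying in a common $k$-level connected component; this induces a surjective quotient map $\phi : V(G) \to V(T)$ sending each vertex of $G$ to the tree node into which it is contracted. By definition, $d_{GT}(x,y)$ is the shortest-path distance in $T$ between $\phi(x)$ and $\phi(y)$, so it suffices to bound this quantity by $d_G(x,y)$.

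First I would establish the basic non-expansiveness of $\phi$ on edges: for any edge $uv \in E(G)$, either $\phi(u) = \phi(v)$ (both endpoints were contracted into the same tree node) or $\phi(u)\phi(v)$ is an edge of $T$. In either case the image of $uv$ is a walk of length at most $1$ in $T$. This is precisely the statement that a graph contraction is a distance non-increasing map in the hop metric, and it follows directly from the way $T$ inherits its edge set from $G$ under the identification of level-connected components.

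Next I would lift this edgewise bound to whole paths. Take any shortest path $x = v_0, v_1, \ldots, v_\ell = y$ in $G$, so that $\ell = d_G(x,y)$. Applying $\phi$ vertex by vertex yields a sequence $\phi(v_0), \phi(v_1), \ldots, \phi(v_\ell)$ in which consecutive images either coincide or are adjacent in $T$; deleting the repetitions leaves a genuine walk in $T$ from $\phi(x)$ to $\phi(y)$ of length at most $\ell$. Since the shortest-path distance in $T$ is bounded above by the length of any such walk, we conclude $d_{GT}(x,y) \le \ell = d_G(x,y)$. The degenerate case $\phi(x) = \phi(y)$ gives $d_{GT}(x,y) = 0$ and is trivially consistent.

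I do not anticipate a serious obstacle, since the result is essentially the monotonicity of the graph metric under minors, specialized to contractions. The only point requiring care is the bookkeeping in the quotient, namely verifying that the edge set of $T$ is exactly the $\phi$-image of $E(G)$, so that no edge of $G$ is forced to map to a longer detour in $T$. Once that identification is pinned down from the layer-partition construction, the inequality follows immediately, and it is precisely this one-sided bound that justifies using the Gromov tree as a lower-bound estimator in the approximation-range computation.
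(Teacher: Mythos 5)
Your proposal is correct and takes essentially the same route as the paper's proof: the paper likewise observes that, since contraction deletes no edges, every edge $\{u,v\}$ of $G$ either collapses to a single tree node or maps to an edge of $T$, giving $d_{GT}(u,v)\leq 1$, and then sums this bound along a shortest path $P(x,y)$ in $G$ to obtain $d_{GT}(x,y)\leq |P(x,y)| = d_G(x,y)$. Your quotient-map formulation $\phi$ is just notational dressing for the paper's contraction map $C(\cdot)$, so no substantive difference exists.
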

\begin{proof}
\fixme{A key observation here is that in the construction, 
we only
contract nodes and do not delete any edge. This contraction of nodes
may coalesce multiple edges into a single edge. Thus, for each edge
$\{u,v\} \in E(G)$, we either have that $u$ and $v$ are contracted to
the same node in $T$ or $\{C(u),C(v)\} \in E(T)$, where $C(u)$ is the
contracted nodes of $u$ in $T$. In particular, this means that for
$\{u,v\} \in E(G)$, we have $d_{GT}(u,v) \defeq d_{GT}(C(u),C(v)) \leq
1$.}

\fixme{Consider the shortest path $P(x,y)$ in $G$ (with 
number of
  edges $|P(x,y)| = d_G(x,y)$). For each edge $\{u,v\}$ in
$P(x,y)$, it holds that $d_{GT}(u,v) \leq 1$. Thus, $d_{GT}(x,y) \leq
\sum_{\{u,v\} \in P(x,y)} d_{GT}(u,v) \leq \sum_{\{u,v\} \in P(x,y)} 1
= |P(x,y)| = d_G(x,y)$.}
\end{proof}

We note in passing that~\cite{chepoi} further refines this 
contraction tree by expanding each partition back to the 
original number of nodes by connecting these 
nodes back to
a single node in the lower layer partition, as
shown in Figure~\ref{fig:ex0}.  We shall not consider this
``layer partition spanner trees'' further as these
only add/subtract a fixed amount to the ``Gromov Tree''
approximation 
(zero distance for nodes in the same partition change
to 2).

\para{$\varepsilon$-approximate Steiner tree.} A 
slightly different version
of hyperbolicity was studied by Abraham {\em et al.} in
\cite{microsoft} and it too leads to a tree
approximation but with a guaranteed multiplicative 
distortion.
As in the Gromov construction, their algorithm fixes a 
root $r \in 
V(G)$ and consider $N_i(r)$.
The tree $T$ is then constructed on vertex set 
$V(G) \cup S$, where $S$ is a set of Steiner points, 
one for each connected component in the graph induced on
each $N_i(r)$, $i > 0$.
The edge set is defined as follows.
For a connected component $C$ in the graph induced on $N_i(r)$, where  
$s_C$ is its corresponding Steiner point, we add edges $cs_C$ of 
weight $\frac{1}{2}$
for each $c$ in $C$ and edge $s_Cx$ of weight $\frac{1}{2}$, where $x$ 
is some vertex of $N_{i-1}(r)$ connected to some vertex of $C$.
Pseudocode is found in Algorithm \ref{alg:microsoft}.

 \begin{algorithm}
 {\bf Input:} Graph $G$\\
 {\bf Output:} Tree $T$\vspace{.1cm}\\
 {\bf let} $r$ be high degree node\\
 {\bf determine} sets $N_0(r),\ldots, N_\ell(r)$\\
 {\bf set} $T = \emptyset$\\
 {\bf for} $i = \ell$ downto $0$:\\
 ...  {\bf for each} connected component $C$ of $N_i(r)$:\\
 ......  add vertex $s_C$ to $T$\\
 ......  add edge $s_C x$ of weight $1/2$ to $T$ (for one $x \in N_{i-1}(r)$ which is adjacent to $c \in C$) \\
 ......  {\bf for each} $c \in C$:\\
 .........  add vertex $c$ and edge $c s_C$ of weight $1/2$ to $T$\\
 {\bf return} T
 \caption{ -- $\varepsilon$-approximate Steiner Tree\label{alg:microsoft}}
 \end{algorithm}

\subsection{Embedding in Geometric Space}
\label{sec:techniques}
In this section, we present oracles that embed a 
graph into some
multi-dimensional geometric space.   
We pick these oracles for 
special
attention since they typically outperform other distance
approximation oracles and thus help benchmark the 
fidelity of our oracle.
Hyperbolic embedding involves explicitly mapping the 
nodes of the graph into points in the hyperbolic space.
(We observe here once more that intrinsic hyperbolicity
is not the same as hyperbolic embedding used in these distance oracles based on~\cite{kleinberg, rigel}.)

Some oracles in this category, such as
Orion~\cite{orion} and that of Qi et al.~\cite{qi13} 
use the $L_2$ norm
distance function in $\mathbb R^{10}$. Others such as Rigel~\cite{rigel} use
a {\em Hyperboloid} model~\cite{BBS_Hyperbolic} with curvature $c$,
where the distance between two $d$-dimension points $x=(x_1,x_2,\ldots,x_d)$ and
$y=(y_1,y_2,\ldots,y_d)$ is defined as follows:
\begin{small}
\begin{equation*}
\label{eq:dis}
arccosh\left(\sqrt{(1+\sum_{i=1}^d{x_i^2})(1+\sum_{i=1}^d{y_i^2})}-\sum_{i=1}^d{x_iy_i}\right)
\cdot |c|
\end{equation*}
\end{small}

 \begin{algorithm}
 {\bf Input:} Graph $G$\\
 {\bf Output:} Coordinates for all nodes in $\mathbb
 R^{10}$\vspace{.1cm}\\
{\bf Select} a set of high degree nodes $L$\\
{\bf for each} $v \in L$:\\
... BFS$(v)$\\
Compute coordinates for $v \in L$ minimizing distortion from each other\\
{\bf for each} $v \in V \setminus L$:\\
... {\bf Select} a subset $L_v$ of $L$\\
... $Ord(v) = $ Coordinates that minimize distortion from $L_v$\\
 {\bf return} Ord
  \caption{ -- Rigel \& Hyperbolic Embedding of Graph\label{alg:rigel}}
 \end{algorithm}
 
These oracles embed a graph $G$ 
by first identifying a small set of
nodes with high degrees that are called 
landmarks. Using BFS from all
landmark nodes, they compute the distance of all 
nodes in the graph
from these landmark nodes. Then, a linear program is defined to
compute an embedding of these landmark nodes in $\mathbb R^{10}$ such
that the difference between the actual pairwise distance of these
landmarks and their distance estimated using the defined distance
function in the embedded space is minimized. Solving this 
linear program (e.g., via the simplex method) provides 
the coordinates of landmark nodes.

The remaining nodes in $G$ are then given a coordinate 
using another
linear program. The objective of this linear program is to 
minimize
the difference between the actual distance of the node 
to a subset
$L_v$ of landmark nodes and the distance estimated using 
the distance
function in the embedded space. Once again, the simplex 
method is used to
solve the linear program. Algorithm~\ref{alg:rigel} 
provides a
pseudocode for the Rigel approach.

The approximate
distance among each pair of nodes is then 
their distance in the
embedding. Thus, the query time and space per node only 
depends on the
dimensionality of the embedding and for a fixed dimension, 
it reduces to $O(1)$ (possibly with a high value for 
the prefactor). 

We remark that a plausible reason for the success of 
these hyperbolic embedding techniques is that
in real-world graphs, the set of `core nodes' of 
high
centrality (as implied by small graph hyperbolicity) and 
the set of
landmark nodes (computed based on degree) includes many of 
these same nodes as shown in Figure~\ref{fig:deg_corr}.

Note that these coordinate-based systems can both 
underestimate and overestimate the real
distances. Rigel has been shown to be 
significantly more accurate 
than high-dimensional
Euclidean embeddings~\cite{rigel} and our preliminary 
experiments also
confirmed the same. Hence, we only consider Rigel from this 
category
in our empirical comparison.

\begin{figure*}[!htb]
\subfigure[Santa Barbara, Facebook]
{
\epsfig{figure=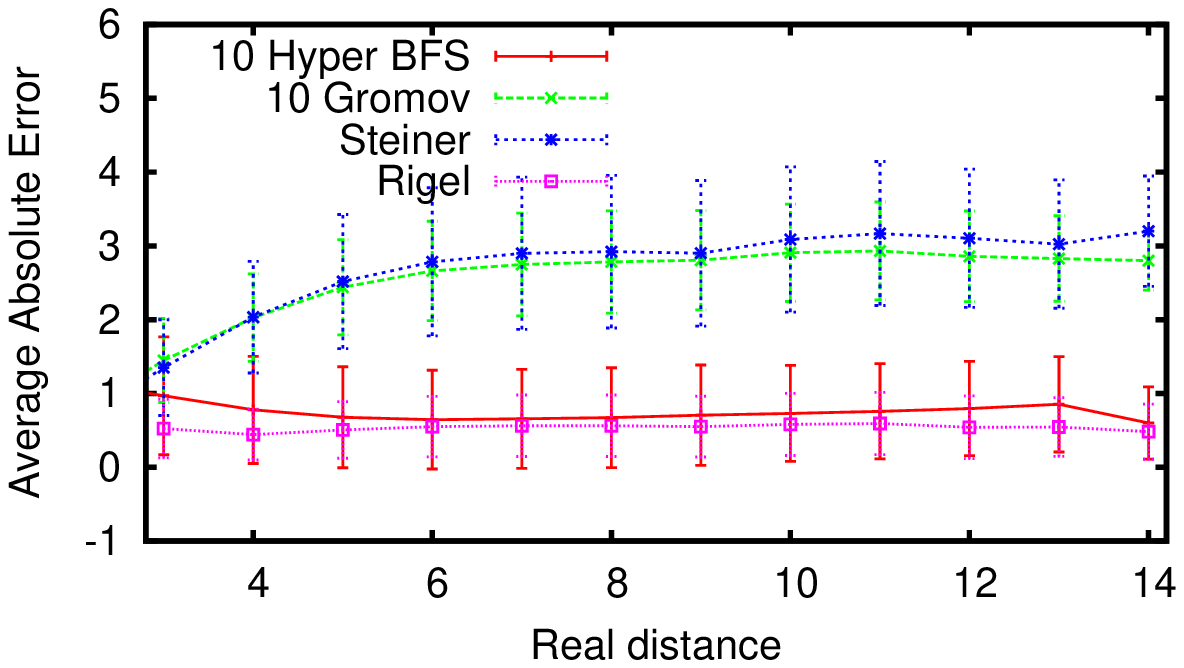,width=2.35in}
\label{fig:avAbsSB}
}
\subfigure[Call Graph]
{
\epsfig{figure=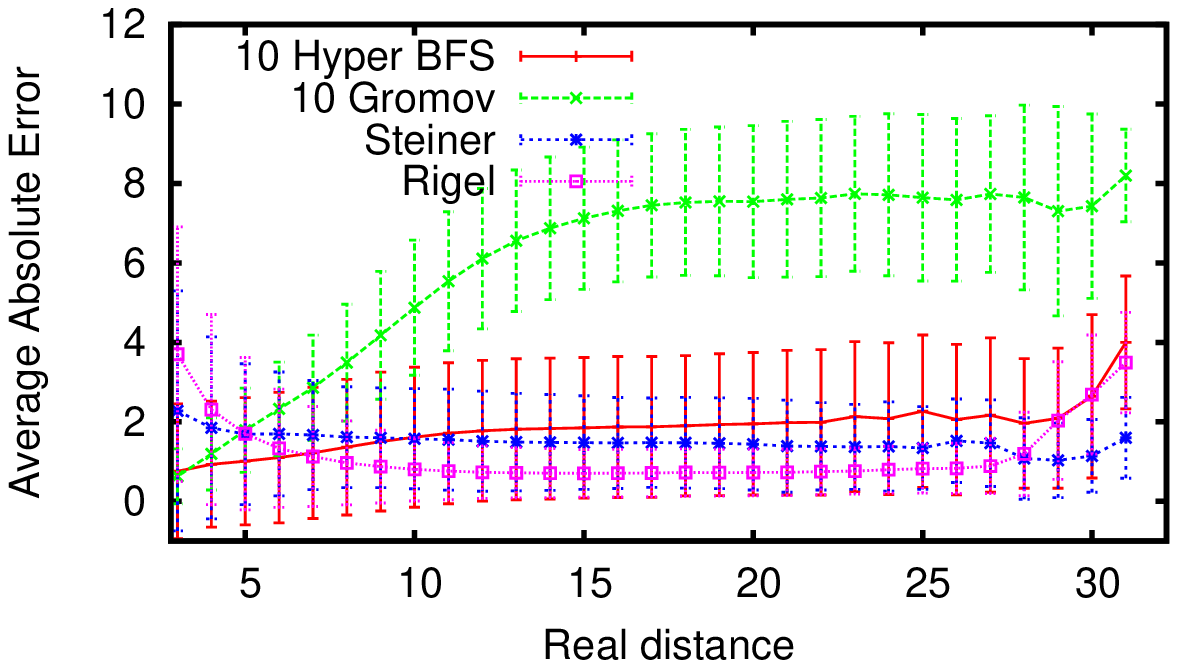,width=2.35in}
\label{fig:avAbsCall}
}
\subfigure[Hypergrid Graph]
{
\epsfig{figure=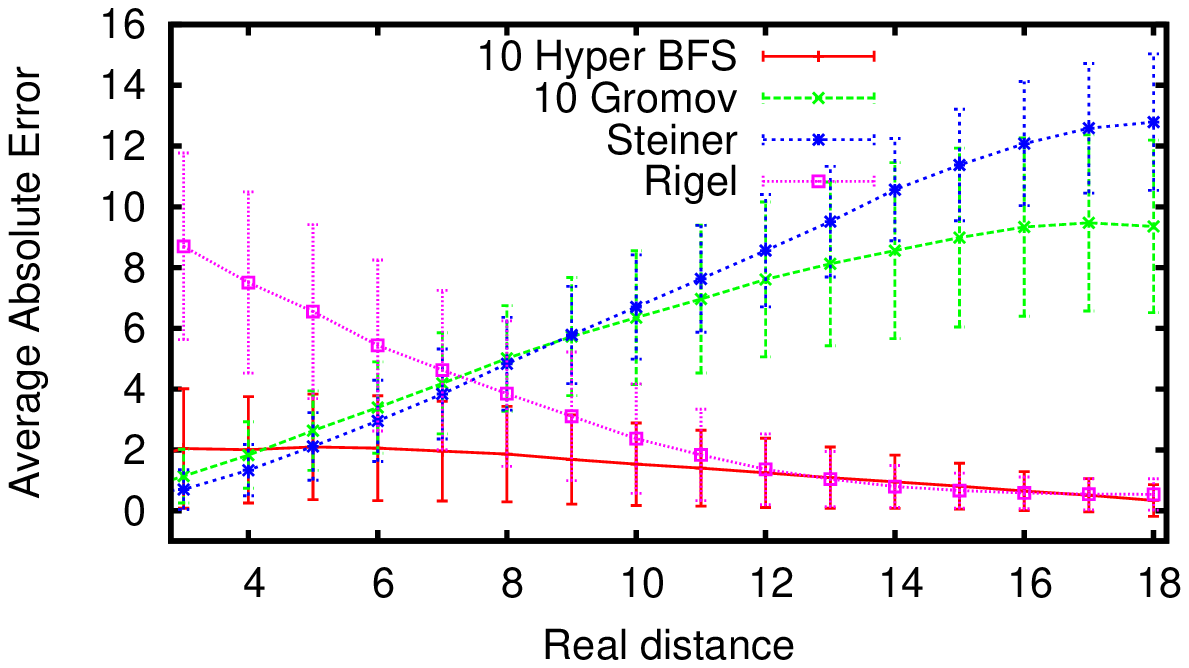,width=2.35in}
\label{fig:avAbsHyper}
}
\subfigure[P2P Gnutella]
{
\epsfig{figure=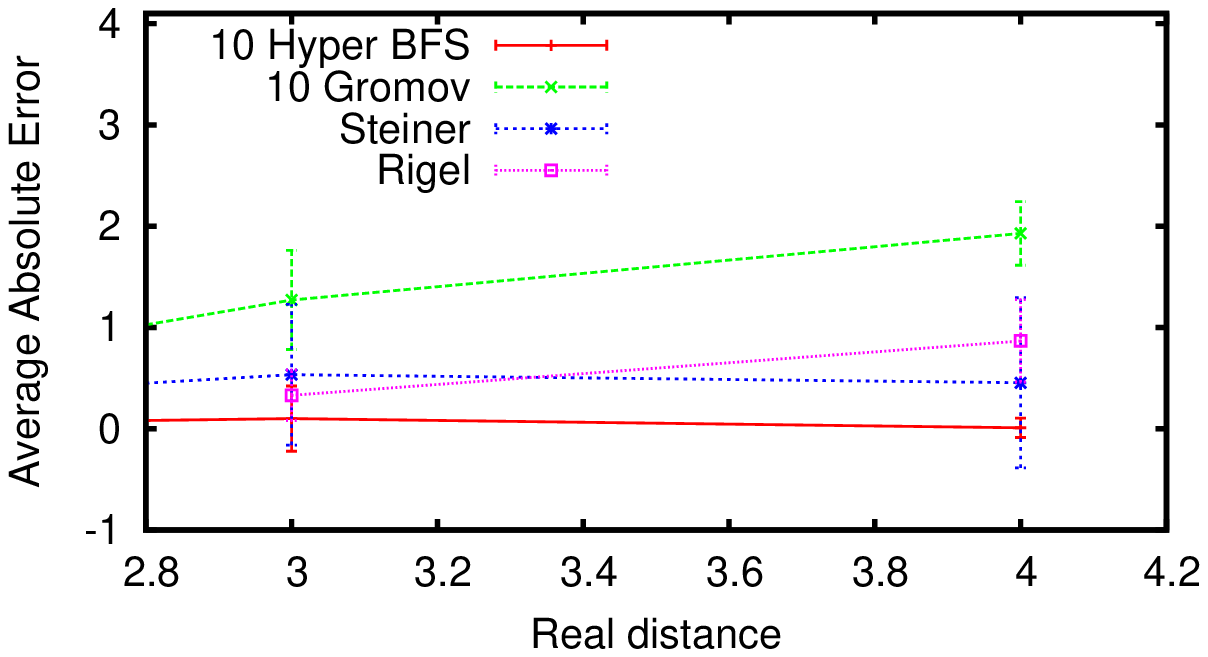,width=2.35in}
\label{fig:avAbsP2pGnutella}
}
\subfigure[Web Berk-Stan]
{
\epsfig{figure=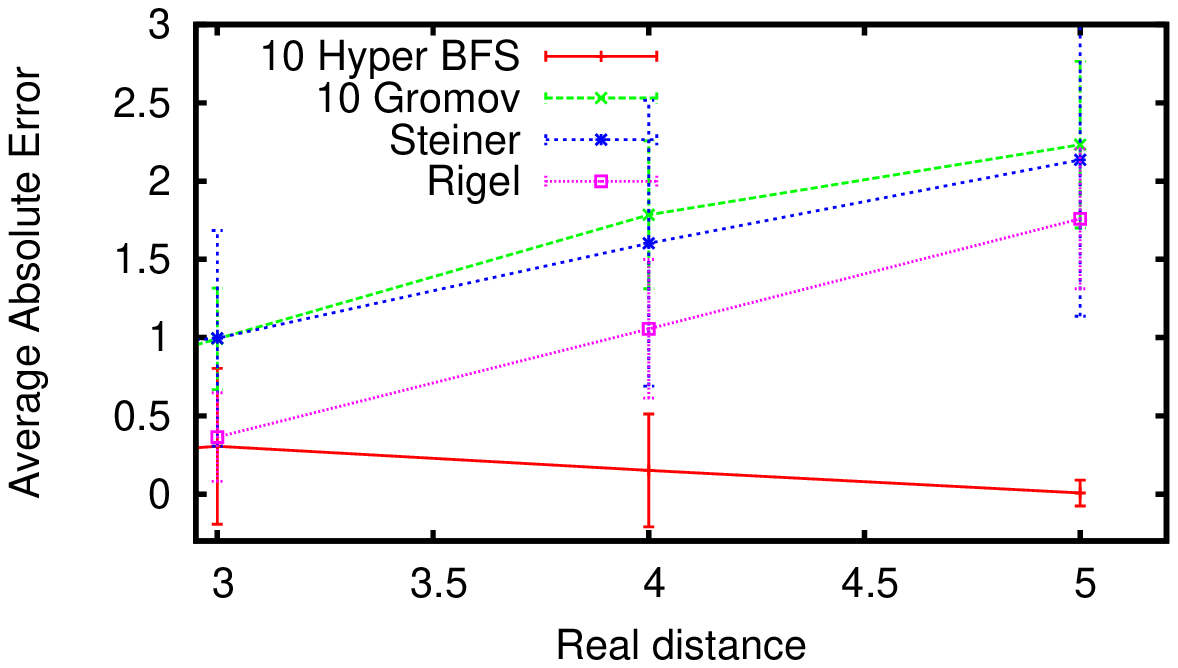,width=2.35in}
\label{fig:avAbsWebBerkStan}
}
\subfigure[Google News]
{
\epsfig{figure=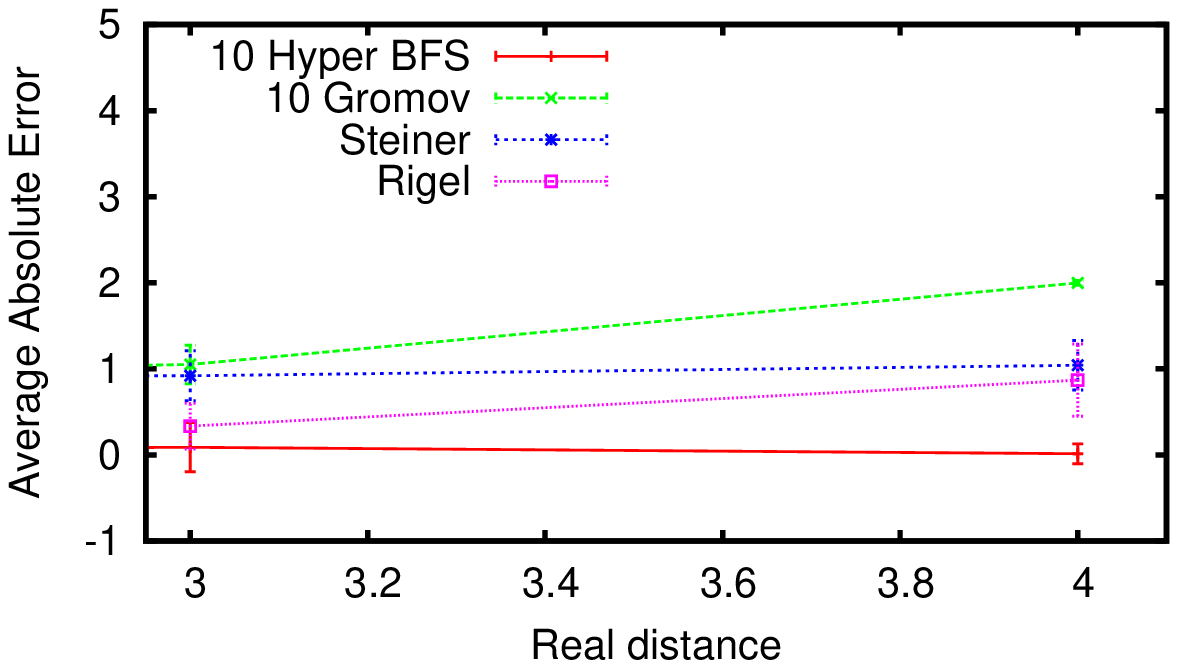,width=2.35in}
\label{fig:avAbsGoogleNw}
}
\subfigure[CaAstro-Ph]
{
\epsfig{figure=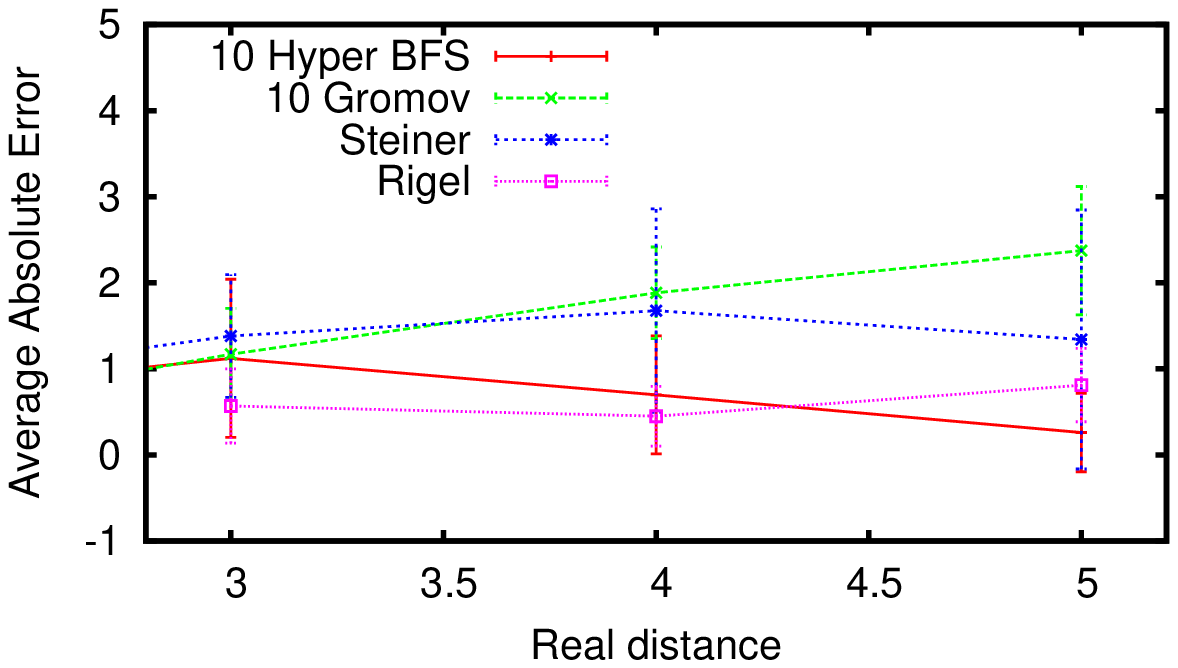,width=2.35in}
\label{fig:avAbsCaAstroPh}
}
\subfigure[DBLP]
{
\epsfig{figure=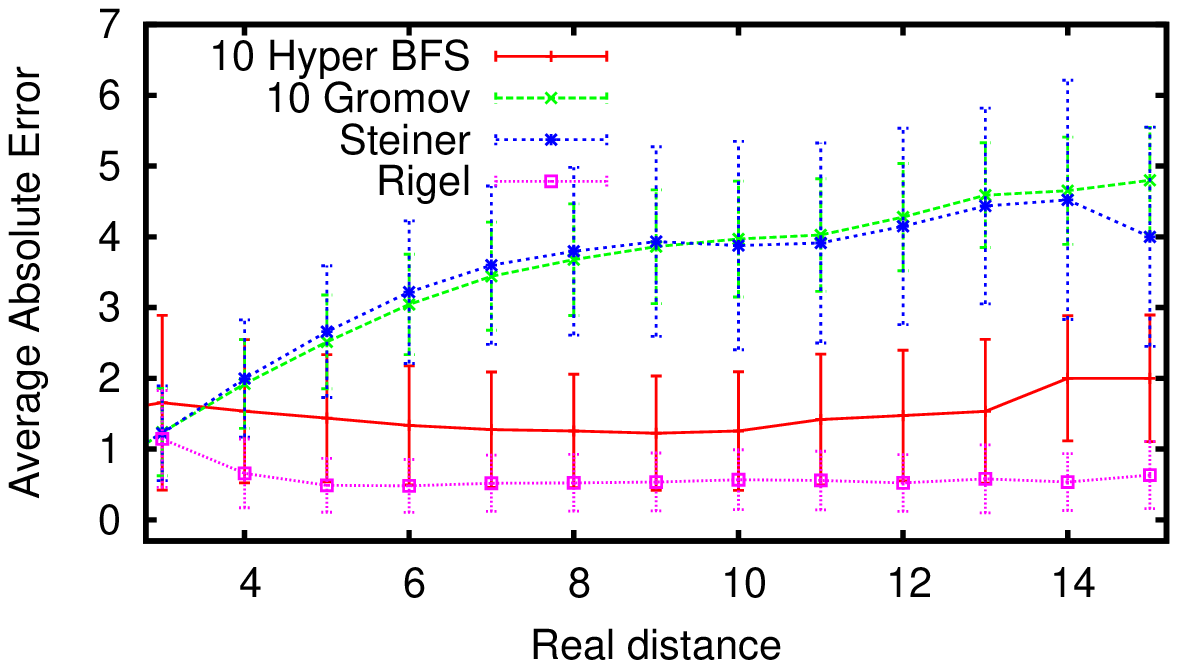,width=2.35in}
\label{fig:avAbsDBLP}
}
\subfigure[Facebook NY]
{
\epsfig{figure=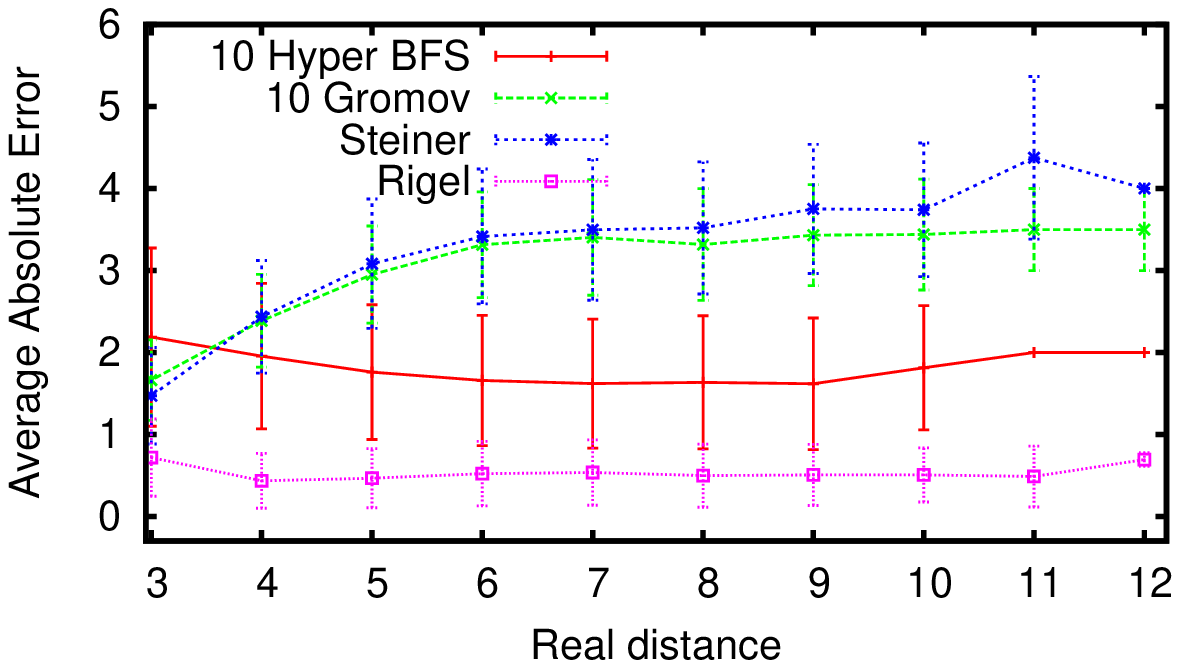,width=2.35in}
\label{fig:avAbsFacebookNY}
}
\caption{Average absolute error of various approximation techniques on
  various synthetic and real-world graphs.}
\label{fig:avAbsError}
\end{figure*}

\section{Experimental Evaluation}
\label{sec:results}
In this section we describe the methodology used in our 
experiments, followed by a detailed analysis.
The goal of our benchmarking is to determine
if the observed $\delta$-hyperbolicity 
of real-world or synthetic networks
yields a low cost and effective distance approximation
competitive with or better than the 1) best-of-class or
2) theoretically guaranteed approaches, as discussed. 

\subsection{Experimental Setup}  
The networks we study range from 10s of thousands to
100s of millions of nodes and edges.  
Storing $(|V|(|V| - 1))/2$ distances between 
all node pairs results in
Gigabytes of storage for small size
graphs and quickly hits hundreds of Terabytes for our 
larger graphs, which is impractical.  
To ensure that accuracy is measured 
over a large enough 
sample of all node pairs in a graph, we use two 
different approaches. 
When the studied graph is small enough, we compute the
exact distortion of distance for each pair of nodes. 
For large
graphs we compute the distance distortion by
sampling a large enough set of node pairs to
ensure that the mean error due to sampling is 
sufficiently small.  
 
A few observations are in order.
First, for each of the Hyper BFS and Gromov Tree 
techniques,
we construct not just one tree but a small
collection of trees rooted at nodes 
in the hyperbolic core of the graph, i.e., 
those with the highest centrality, approximated
with degree centrality for computational efficiency.  
Once we obtain this collection of trees, the distance
between two nodes $x$ and $y$ is answered by returning 
the minimum of
the distances between $x$ and $y$ in the different 
trees of the Hyper BFS. 
For the Gromov Tree, the maximum distance gives a
better lower bound to the graph distance, since  
Gromov Trees are contractions.  
Second, we have verified experimentally that
$10$ such trees are enough to provide very good distance
approximations for the Hyper BFS and our
implementation of Gromov Trees. 
This is an improvement relative to
prior work~\cite{20BFS}, where it was observed
experimentally that $20$ of their spanning trees were
needed for a similar level of accuracy.  This 
improvement is likely due to our selection of roots 
from the centrality core of the graph.
Finally, although we have computed and carefully 
examined all four 
oracles on all 11 benchmarked graphs, in what 
follows we present only the most representative 
plots. 

\para{Measures of Distortion.}
We use three measures for evaluating the performance of the various 
distances approximations on each graph.
Each of these measures compares the {\em ground truth} graph 
distance $d_G$ with the approximate distance given by the proxy $d_A$ 
and captures a different performance feature.  
\begin{definition}
Let $x,y$ be vertices of a graph $G$ and let $d_A$ be the distance
approximated by a distance oracle. 
\begin{itemize}
\item The {\em additive distortion} between $x$ and $y$ with respect to $d_A$ is $d_G - d_A$.
\item The {\em absolute distortion} between $x$ and $y$ with respect to $d_A$ is $|d_G - d_A|$.
\item The {\em multiplicative distortion} between $x$ and $y$ with respect to $d_A$ is $\frac{|d_G-d_A|}{d_G}$.
\end{itemize}
\end{definition}
The selection of proper distortion metrics is critical in assessing
the quality of each distance approximation model since each metric
captures distinct aspects of the distortion error and may have
different consequences for different applications. The choice of a measure is also important in the context of different kind of bounds
known for different approaches. For instance, the Gromov tree approach
has additive guarantees while $\varepsilon$-approximate Steiner tree has
multiplicative guarantees on distance errors.
We visualize each of these distortion metrics by plotting its expected
value conditioned on the value of $d_G$ being fixed. This allows us to
study the accuracy of various techniques for node pairs at different
distances. This is important as applications, like graph segmentation, SVD
decomposition, recommendation systems or influential node detection,
may require different accuracy guarantees across the range of node
distances, i.e. neighbor nodes, diametrically opposite nodes or node
pairs with distances in between the two extremes. 
For instance, recommendation systems may require highly accurate 
short range distances around a particular user to detect similar 
users in a particular radius; on the contrary, 
influential node detection would require accurate long distances 
to determine the centrality of a particular node.
 \begin{figure}
\centering
\includegraphics[scale=0.6]{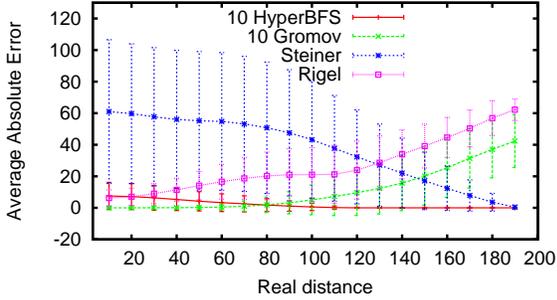}
\caption{Average absolute error of various approximation techniques
on a SquareGrid network}
\label{fig:avAbsFlat}
\vspace{-0.1in}
\end{figure}
\vspace*{0.4cm}
\subsection{Methods' Fidelity}
\label{sec:results_fidelity}

\para{Comparing Oracles Based on Absolute Error.}
Our first key observation is that the error 
due to our oracle in estimating shortest path
distances is very small. The 10-Hyper BFS approximation
results in an absolute error of less
than 2 in all instances.
Figure~\ref{fig:avAbsError} shows 
results on eight real graphs and one synthetic graph, HyperGrid. 
We next compare the accuracy of the Hyper BFS approach with the best practical 
oracle system, i.e. Rigel. This system is particularly optimized for networks 
with Power-Law degree distribution like Santa Barbara(Figure~\ref{fig:avAbsSB}) 
and a Call Graph (Figure~\ref{fig:avAbsCall}). 
However, it significantly loses accuracy for pure 
hyperbolic graphs like Hypergrid(Figure~\ref{fig:avAbsHyper}) 
for short distances.   
We note that our proposed
Hyper BFS technique shows intriguingly low 
mean error on real-world graphs: on the same range as 
Rigel on real-world graphs and better for others. 
Figure~\ref{fig:avAbsError} also shows that the other 
techniques, 10 Gromov and Steiner trees,  
achieve poor results,
particularly for node pairs at large distances. Compared to 
10-Hyper BFS and Rigel, the Gromov tree contraction 
technique results in large
errors, not only on the call graph and SantaBarbara 
Facebook
graph shown in these figures, but also on the other real-
world graphs
that we considered. This technique contracts many nodes 
into the same
high degree nodes, thereby underestimating their distances. 
For instance, on
node pairs at a distance of 13 from each other on the call
graph, it has an average absolute error of around 6.5, i.e. 
around 50\%.

\para{Short-Long Distance Approximation Accuracy.}
Here we study how different topological structures impact
the fidelity of various techniques on different distance
lengths. An important observation that follows from
Figure~\ref{fig:avAbsError},  
is that Rigel and 10-Hyper BFS result in better accuracy 
for node pairs with long
distances while Gromov performs consistently poorer for 
such
node pairs. We expect that the consistent accuracy 
of 10-Hyper BFS and Rigel, for
farther node pairs, is due to their having nodes 
with high 
betweenness centrality (as
approximated by degree centrality) at the root 
or center of their
embedding. This ensures that the distances for a large 
number of node pairs whose
shortest path passes through the core nodes are correctly 
estimated.  Interestingly enough, doing the same for 
`Gromov Tree' did not seem to improve its accuracy by
much, possibly due to its large number of nodal contractions.
For short distances instead, Rigel accuracy, in particular 
on the HyperGrid (Figure~\ref{fig:avAbsHyper}), leads to a 
mean absolute error  of 8 for the
real distance of 3, but gets more accurate for
node-pairs at larger distances.  This is clearly due
to the fact the a planar graph hardly fills a
hyperbolic space and hyperbolic embedding is a
poor fit.  
Finally, on the SquareGrid network (Figure~
\ref{fig:avAbsFlat}), which is typical of planar graphs and 
close to road networks, some techniques like Rigel and 
Steiner have very different behaviors compared to the 
previous networks. Rigel, for instance, estimates long 
distances with a very large absolute
errors. On the contrary, Steiner shows a very high error 
for distances less than 100 and then for some reason it 
dramatically improves for very long distances.
The fact that these techniques perform well on real-world 
graphs
and synthetic HyperGrid graphs but not on SquareGrid 
graphs, 
suggests that the reported success of these
techniques most likely relies on the intrinsic hyperbolicity 
of real world graphs and use of root nodes 
in the hyperbolic core, two properties not reported
heretofore. However, 10-Hyper BFS seems to cope with 
flatness well, most likely due to its BFS spanning tree
structure than its other features such as use of 
highly central nodes as root, 
which a square grid does not posses.

\begin{figure*}
\subfigure[Average Additive Error]
{
\epsfig{figure=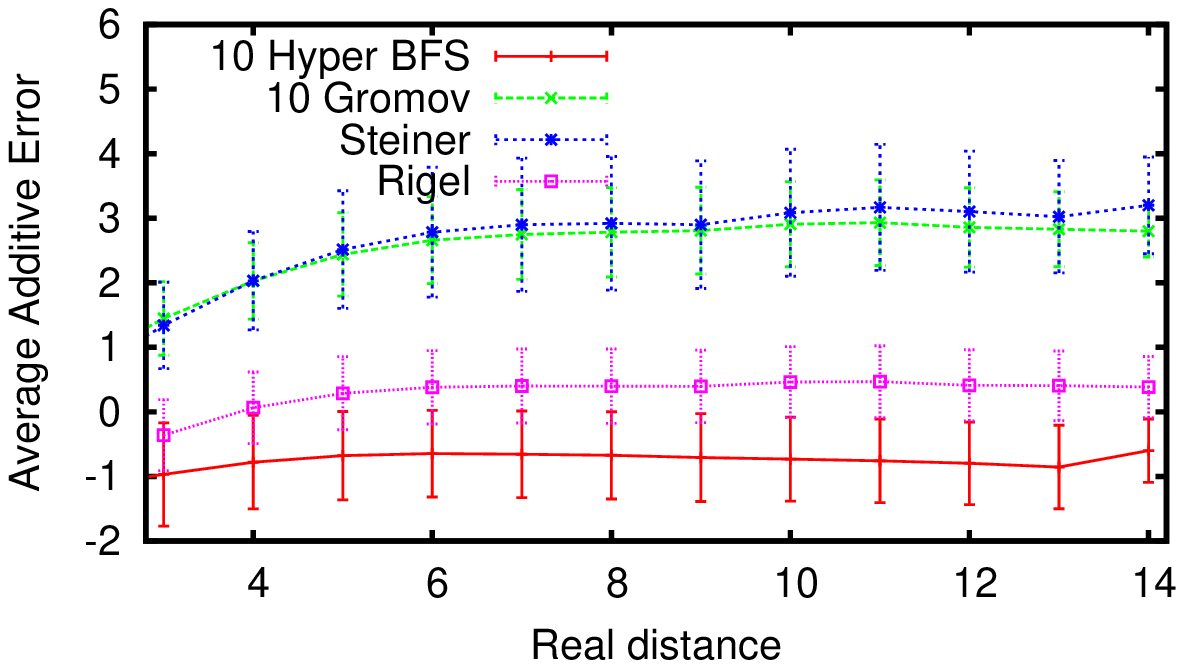, scale=0.675}
\label{fig:avAddSB}
}
\subfigure[Average Multiplicative Error]
{
\epsfig{figure=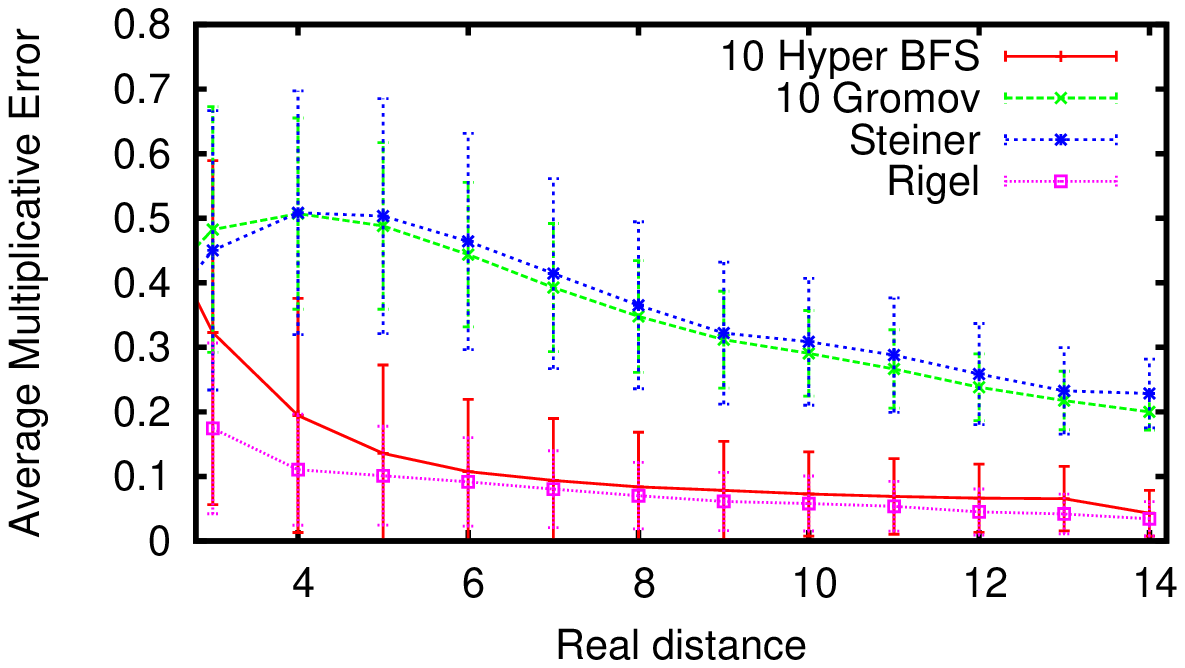, scale=0.675}
\label{fig:multSB}
}
\caption{Comparing accuracy of the various approximation techniques on the SantaBarbara Facebook graph.}
\end{figure*}

\para{Fidelity Versus Theoretical Bounds.} 
Figures~\ref{fig:avAddSB} and~\ref{fig:multSB} show the additive and multiplicative errors of the various techniques
on SantaBarbara Facebook graph, to provide a quantitative comparison against the respective theoretical bounds provided by the Gromov and the Steiner constructions. 
For $\delta$-hyperbolic graphs, the theoretical worst-case bound for
the Gromov tree based approach is an additive factor of $\delta \log{n}$. For the social networks we considered, $\delta$ was a small constant,
but owing to large sizes of the graph, this theoretical bound of
$\delta \log{n}$ is more than the
diameter of the graph. However, in Figures~\ref{fig:avAddSB} 
we observe that the average additive error of
Hyper BFS is much smaller than the theoretical bound, and
the Gromov tree contraction is evidently the most erroneous compared 
to other techniques considered.
Similarly, the theoretical guarantee for $\varepsilon$-approximate
Steiner tree approach is quite large for the studied graphs, in Figure~\ref{fig:multSB} we observe a
comparatively small multiplicative error of less than 0.6 for most
node pairs on this graph. This provides further evidence that the
worst-case bounds for the accuracy of these oracles are
pessimistic and the average error on real-world graphs is actually
much less. Rigel and 10-Hyper BFS approaches are 
significantly better the 
theoretical bounds, by exhibiting both a smaller additive and 
multiplicative errors 
compared to the Gromov and Steiner techniques.
These results confirm that the shortest paths in these graphs are
well-captured by `backbone trees', which is in line with
our understanding of the intrinsic hyperbolicity of 
the real-world graphs. 

\para{Exploiting General Hyperbolic Topologies.}
To reduce the dependence of Hyper BFS on the correlation between
degree and betweenness (centrality) of nodes and, thereby, 
to make this technique
more accurate on a wider range of graph classes, 
we consider alternative
strategies for selecting the root nodes for our tree oracle. We identify
a node with high closeness\footnote{A high closeness central node is a node with low shortest path distances to all other nodes in the graph.} centrality by selecting a random node $u$,
finding the node $m_u$ that is at maximum distance from $u$, selecting
a node $m_{m_u}$ at maximum distance from $m_u$ and returning the node
in the middle of a path between $m_u$ and $m_{m_u}$ in the
graph. We note that similar techniques have been used to
approximate diameter of large graphs (see e.g.,~\cite{crescenzi10}).

We aim for a technique that would work well on a general graph topologies, 
particularly real-world social interaction graphs. 
To this end, we
consider a diverse seeding strategy, in which the 
first seed node
is random, the second seed is selected at maximum distance from first
seed, the third seed is selected based on closeness centrality (as
described before) and the remaining seeds are selected from among high
degree nodes. We found that this {\em diverse}
seeding strategy performs well on various graph classes, and in
particular on HyperGrid, its accuracy is close to the best results from degree
based root selection and closeness based root selection methods
(cf. Figure~\ref{fig:avAbsBFSHyper}).

\begin{figure}
\centering
\includegraphics[scale=0.25, angle=270]{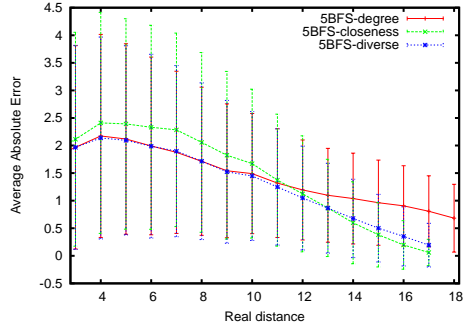}
\caption{Average absolute error of various Hyper BFS approaches with
  diverse seeding on
the HyperGrid network.}
\label{fig:avAbsBFSHyper}
\vspace{-0.1in}
\end{figure}

\subsection{Role of Hyperbolic Core}
The above benchmarking tests show that Hyper BFS is 
at least as fast as other tree-based oracles and gives 
distance estimates comparable to the best.
We recall that Hyper BFS constructs a spanning tree of 
the graph based on the ordering of nodes
by their centrality, with a root in the hyperbolic
core of the graph.  What aspect of Hyper BFS is
key to it performance? Here we argue that
the selection of the root node of the BFS spanning tree 
in the hyperbolic core
is likely the most critical element in 
making Hyper BFS a strong distance oracle for real-world 
networks.  We show this by two sets of experiments; The
first set keeps the root node in the hyperbolic core but
changes the rest of the spanning BFS tree,
and the second set changes the root but keeps 
everything else the same.

\begin{figure*}[!htb]
\subfigure[DBLP]
{
\epsfig{figure=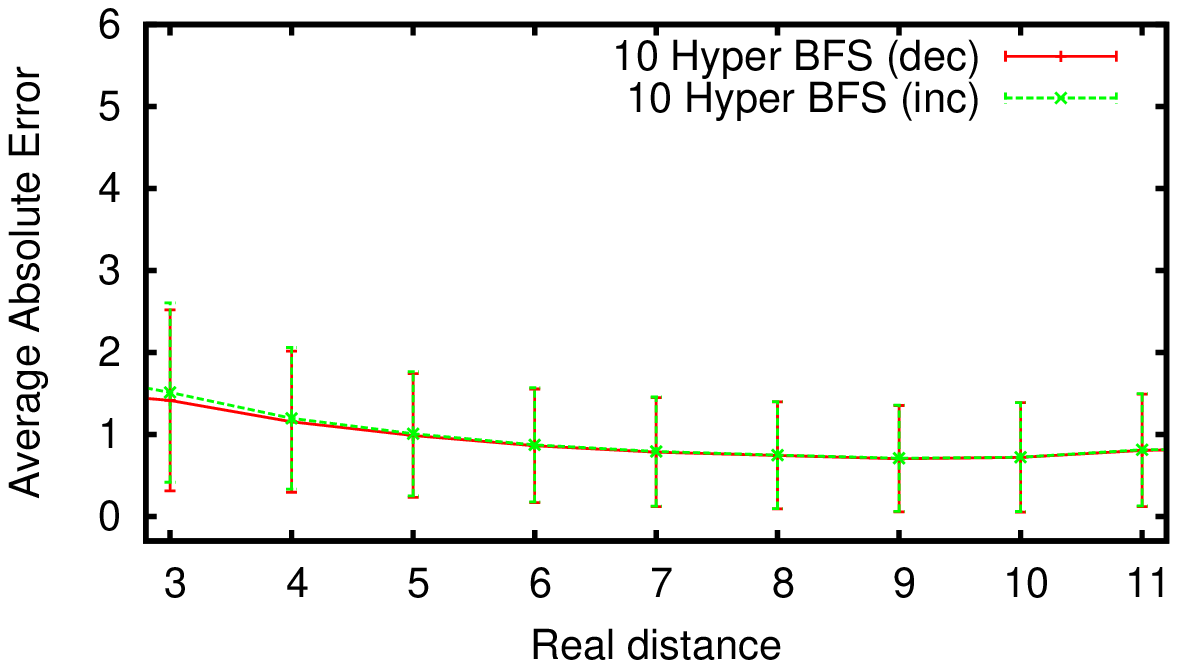,scale=0.675}
\label{fig:avAbsDBLP-dec}
}
\subfigure[Santa Barbara, Facebook]
{
\epsfig{figure=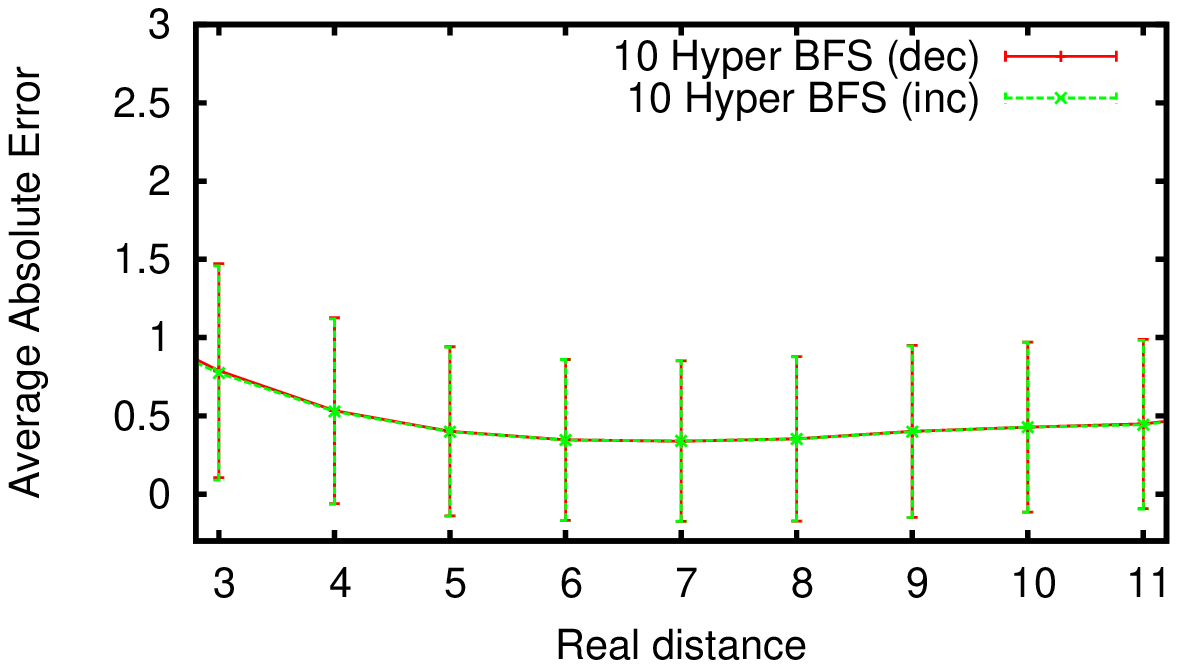,scale=0.675}
\label{fig:avAbsSB-dec}
}
\subfigure[DBLP]
{
\epsfig{figure=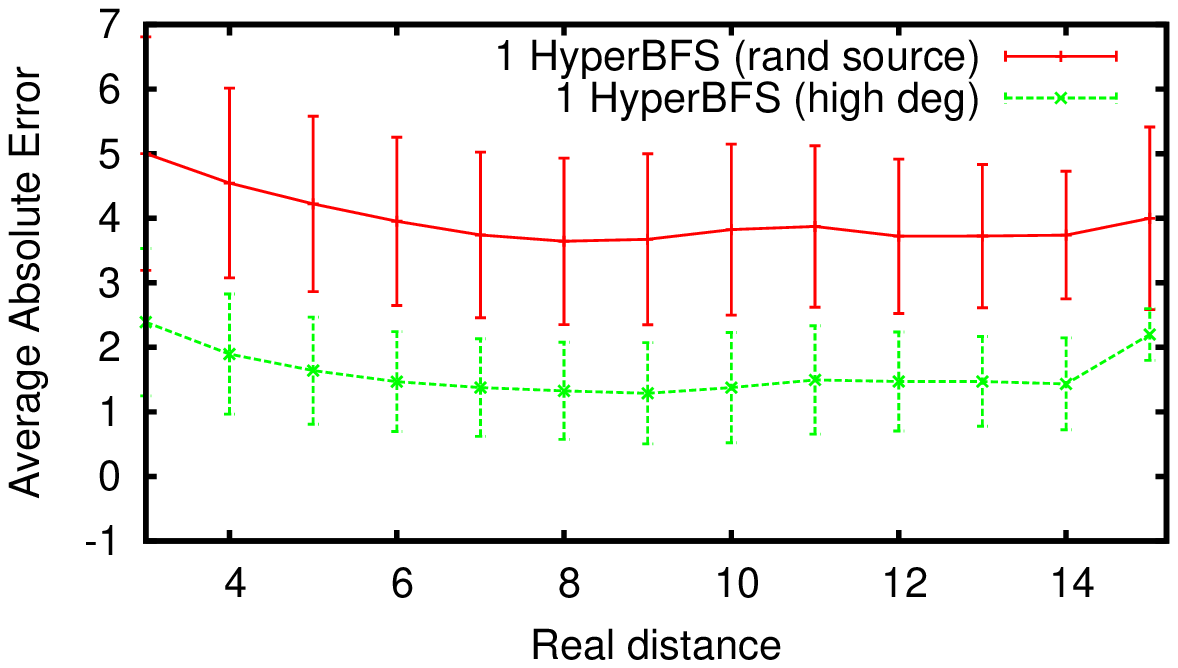,scale=0.675}
\label{fig:avAbsDBLP-randRoot}
}
\subfigure[Santa Barbara, Facebook]
{
\epsfig{figure=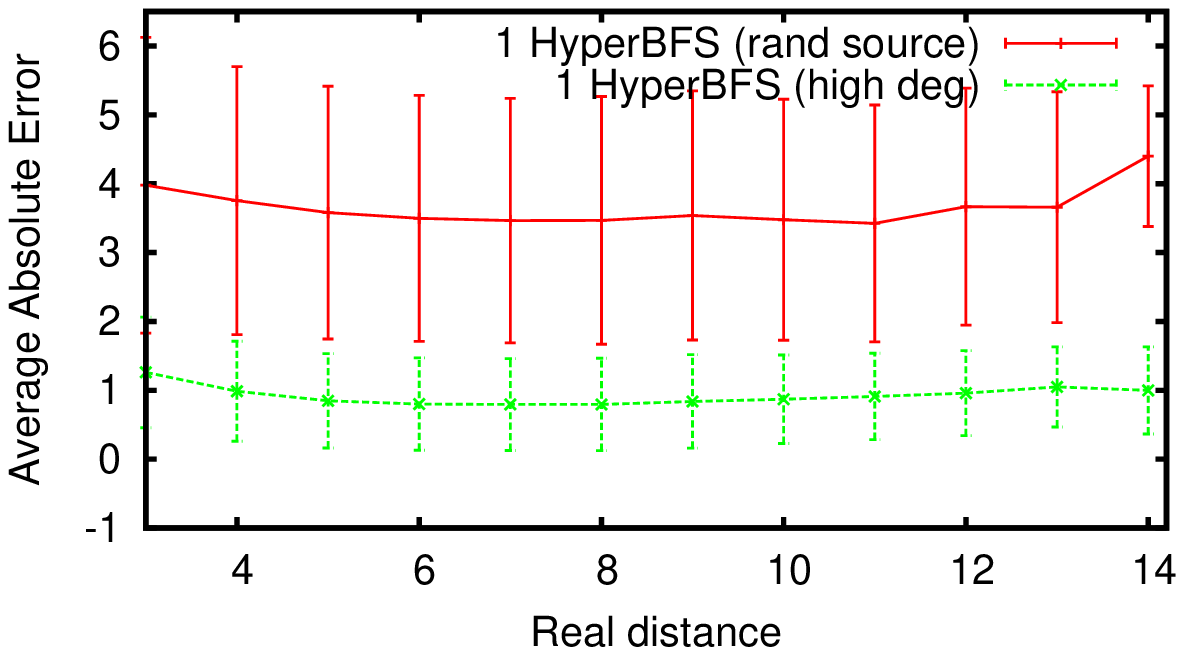,scale=0.675}
\label{fig:avAbsSB-randRoot}
}
\caption{Top (a)-(b): Comparison of Hyper BFS and Hyper BFS(inc) 
with increasing nodal degree, on two representative 
networks both with highest centrality node as BFS root.  
Bottom (c)-(d): Two instances of 
Hyper BFS on the same two networks, one with highest centrality
node as root (green) and the other with a random node as root (red).}
\label{fig:antiHyper}
\end{figure*}

Figures~\ref{fig:avAbsDBLP-dec}-\ref{fig:avAbsSB-dec} 
show comparisons on 
two sets of benchmark real-world networks between 
Hyper BFS and another implementation of Algorithm 1,
which we label Hyper BFS(inc). 
In this implementation of Hyper BFS,
the root node is still selected from the
hyperbolic core (i.e., has the highest centrality) but 
the ordering of the subsequent nodes in the BFS 
spanning tree is in reverse of Hyper BFS, that is,
in order of increasing degrees.  
As it can be seen, the results are almost identical 
in terms of average distortion compared to Hyper BFS.  
However,
when we start the Hyper BFS tree with a random root node
as shown in 
Figures~\ref{fig:avAbsDBLP-randRoot}-\ref{fig:avAbsSB-randRoot},
then we see a factor 2-4 increase in the 
absolute error
compared to Hyper BFS where the root has highest
centrality. This result, perhaps
initially surprising, is consistent with our
understanding of the structure of real-world graphs,
since the key property of $\delta$-hyperbolic
networks is the existence of a core whose centrality is
$O(n^2)$, thus ensuring that a large fraction of all
shortest paths automatically traverse this small set
and for which distance error is zero for that many
node pairs.

\subsection{Scalability of Hyper BFS}
The Call Graph II has $\sim$50 million nodes 
and $\sim$300 million edges.  
But as it can be seen from
Figure~\ref{fig:avAbsErrorSKTHyper} and Table~\ref{table:scalabilityHyerBFS_SKT}, the total
run time for the Hyper BFS on this network was under 
one minutes (under 10 minute for 10 Hyper BFS), 
including one million node-pair distance 
queries that were completed in 25 seconds, all on a 
2.4 GHz Intel(R) Xeon(R) processor with 190 GB of RAM.
We do not know of faster 
distance approximation techniques with error 
as small as shown in 
Figure~\ref{fig:avAbsErrorSKTHyper}.
In our implementation of Hyper BFS, we used a 
standard tree labeling scheme whereby we
store the list of parent nodes of each node to the
root, $n k$ indices for $k$ Hyper BFS trees on $n$ nodes.
Since the height of the Hyper BFS tree is $O(\log{n})$,
a query has $O(k \log{n})$ complexity.
We did not run the other three oracles on
this data set as competitive implementations 
for a graph of this size required 
substantial optimization of the code and investment of
time.  We expect Rigel to require several hours for 
computation of its embedding phase but from there 
it is likely competitive with Hyper BFS in completing one 
million queries in 20-30 seconds and based on
our observations from other sample real networks,
we expect similar or marginally better accuracy.

\begin{figure}
\centering
\includegraphics[scale=0.65]{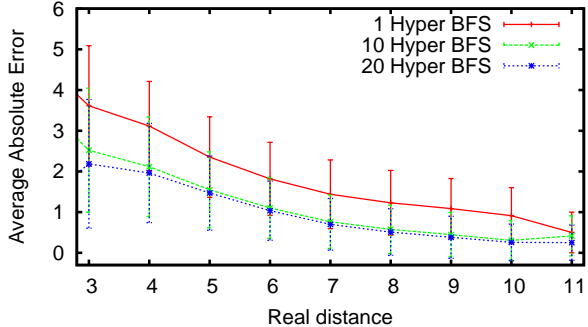}
\caption{Hyper BFS accuracy on a 50M node call-graph with different number of trees, i.e. from 1 to 20.}
\label{fig:avAbsErrorSKTHyper}
\vspace{-0.1in}
\end{figure}

\begin{table}[t]
\begin{center}
\caption{\small Computational Time of Hyper BFS on Call Graph II.}
\vspace{0.1in}
\begin{tabular}{|c|c|c|}
\hline
  Loading Graph& Hyper BFS Tree & 1M Queries \\
 \hline
 250 sec& 50 sec& 25 sec\\
 \hline
\end{tabular}
\label{table:scalabilityHyerBFS_SKT}
\end{center}
\vspace{-0.2in}
\end{table}

\subsection{Bounding the Hyper BFS Error}
\label{sec:approx_range}
\fixme{Since the Hyper BFS is a spanning tree of the graph, the distance $d_H(x,y)$ in
the Hyper BFS tree is an upper bound on the distance $d_G(x,y)$ in the
graph for any node pair $(x,y)$. On the
other hand, a Gromov tree is a contraction of the input graph and the
distance in the Gromov tree $d_{GT}(x,y)$ is a lower bound on the
graph distance $d_G(x,y)$ (cf. Lemma~\ref{lemma:gromov_contraction}). Thus, we can use these
approaches together to get an approximation range $[l(x,y),
u(x,y)]$ with guaranteed upper $(u(x,y))$ and lower $(l(x,y))$ bound for the actual
distance $d_G(x,y)$ in the graph, i.e., $l(x,y) \leq d_G(x,y) \leq
u(x,y)$. Since $d_G(x,y) \leq d_{GT}(x,y) + 2 \delta \log{n}$, we can
use $u(x,y) = \min \{d_H(x,y), d_{GT}(x,y) + 2 \delta \log{n} \}$ and
$l(x,y) = d_{GT}(x,y)$. By the above definition of the upper and lower
bound, it follows that the width of the approximation range
$u(x,y) - l(x,y)$ is less than equal to $2 \delta \log{n}$.}

\fixme{We can reduce the width of this approximation range at the cost of
increasing the precomputation time and the storage space by running
multiple runs of Hyper BFS tree and the Gromov Tree from different root
nodes. Thus, the new lower bound is the maximum over all Gromov Trees.
The new upper bound is the minimum over all the hyper BFS trees and the
Gromov trees with additive error bound,  i.e., $u(x,y) =
\min\{\min_H\{d_H(x,y)\}, \min_{GT}\{d_{GT}(x,y) + 2 \delta \log{n}\}\}$. 
Note that both in the Hyper BFS tree and in the Gromov tree, the
distance from the root $r$ to any node $x \in V$ is exact, i.e.,
$d_H(r,x) = d_{GT}(x,y) = d_G(x,y)$. Thus, if we use $n$ different
hyper BFS trees with different root nodes, the minimum over them will
yield the exact graph distance. Similarly, if we use $n$ different
Gromov trees with different root nodes, the maximum over them will
give the exact graph distance, resulting in a approximation range
width of zero. However, this extreme point of the solution space
requires $O(n^2)$ preprocessing time and $O(n^2)$ storage space.}

\begin{figure}
\centering
\includegraphics[scale=0.65]{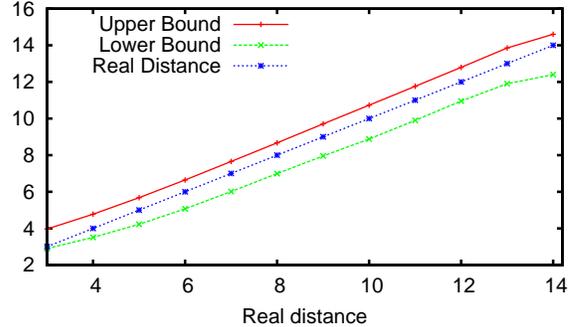}
\caption{Upper and lower bounds and the width of the approximation
  range by using 10-Hyper BFS tree and 20 Gromov trees on the
  SantaBarbara Facebook graph.}
\label{fig:up_low_sb}
\vspace{-0.05in}
\end{figure}

\fixme{The accuracy of the Hyper BFS approach is seen to be much
smaller than the $2 \delta \log{n}$ theoretical bound for Gromov
trees, on the real-world graphs that we tested. This implies that we
can obtain a fairly small approximation range by combining the
Hyper BFS and the Gromov tree approach. In fact, as
Figure~\ref{fig:up_low_sb} shows, the upper and lower bounds on
SantaBarbara Facebook graph are quite
close and the resultant range with just 10-Hyper BFS trees and 20
Gromov trees is quite small.}

\section{Conclusion}
\label{sec:conclusion}
In this work, we construct a novel distance
oracle that is based on the intrinsic hyperbolicity 
of the graph and specifically leverages its
hyperbolic core.  Rooting a breadth-first-search spanning 
tree within the 
hyperbolic core of the graph ensures that a fixed
fraction of all
shortest paths have no distance estimation error, 
and the remaining ones have small error.  Furthermore,
the smaller the hyperbolic
constant $\delta$ of the graph, the better such a tree
approximation is expected to be.  

We tested an implementation of this theory-based framework 
on a dozen real and synthetic large graphs, from 10s of
thousand to 10s of millions of nodes, and found that in 
practice it leads to surprisingly fast and accurate results, 
significantly better than 
anticipated by existing
theoretical error bounds that have been proven for 
alternative tree approximations of the graph.
This theoretical framework, especially the existence of
the hyperbolic core of the graph, together with evidence 
from our experiments
also suggest that the success of prior heuristic distance
oracles may also be due to the same reasons: 
1) the underlying
hyperbolicity of the real-world graphs and 
2) good correlation between nodal 
degrees and their
betweenness centrality in many real-world graphs. 
In particular, $\delta$-hyperbolicity implies
that in these graphs, shortest paths can be well-
approximated by appropriately rooted 
`backbone spanning trees'.

Another interesting observations is that a 
simple approach of
computing a few BFS trees from high centrality/degree seed nodes
provides quite accurate results. 
To improve this technique further,
we considered two strategies: 1) 
selecting a diverse set of seed nodes for growing 
BFS trees and 2) to grow the tree by 
expanding along high 
degree nodes first. We found that our first strategy 
helped make the
technique more robust -- 
it helped the 20 BFS provide good accuracy
even on hypergrid graphs where there was no 
degree distribution as proxies for betweenness centrality
of nodes in the BFS expansion. 
The second strategy helped on 
some synthetic graphs, such as hypergrids,
but it did not improve distance approximation on 
real-world networks: it makes a big difference to
root the spanning tree approximation at
a vertex in the hyperbolic core.
We note additionally that
compared to the best performing distance approximation 
oracles, 
such as Rigel, our approach has a very low computing 
cost for creating the oracle and
thus lends itself well to settings where the graph 
connectivity changes frequently, such as in
large scale dynamic graphs.

We contend that further theoretical understanding of 
the geometry of
$\delta$-hyperbolic graphs can lead to even better 
distance oracles for real-world graphs. For example,
it would be helpful to know how large the hyperbolic 
core of a $\delta$-hyperbolic graph can be.
Another interesting open problem is to find alternative 
proxies for betweenness centrality in real-world 
graphs which, as with nodal degrees, is easy to compute.

\para{Acknowledgment.}  
The work of Iraj Saniee and Sean Kennedy was supported 
by the AFOSR grant no. FA9550-11-1-0278 and the
NIST grant no. 60NANB10D128.
%

\end{document}